\newtheorem{thm}{Theorem} [section]
\newtheorem{prop}[thm]{Proposition}
\newtheorem{coro}[thm]{Corollary}
\newtheorem{lemma}[thm]{Lemma}
\newtheorem{defn}{Definition}[section]
\newtheorem{assmp}{Assumption}[section]
\newtheorem{remark}{Remark}[section]
\newtheorem{prob}{Problem}[section]
\numberwithin{equation}{section} 
\renewcommand{\geq}{\geqslant}
\renewcommand{\leq}{\leqslant}
\renewcommand{\ge}{\geqslant}
\renewcommand{\le}{\leqslant}
\newcommand{\citeassmp}[1]{Assumption \ref{#1}}
\newcommand{\opfont}{\mathbb}
\newcommand{\BE}[2][]{\ensuremath{\operatorname{\opfont{E}}^{#1}\!\left[#2\right]}}
\newcommand{\BP}[2][]{\ensuremath{\operatorname{\opfont{P}}^{#1}\!\left(#2\right)}}
\newcommand{\R}{\ensuremath{\operatorname{\mathbb{R}}}}
\newcommand{\dd}{\ensuremath{\operatorname{d}\! }}
\newcommand{\dt}{\ensuremath{\operatorname{d}\! t}}
\newcommand{\ds}{\ensuremath{\operatorname{d}\! s}}
\newcommand{\dx}{\ensuremath{\operatorname{d}\! x}}
\newcommand{\dz}{\ensuremath{\operatorname{d}\! z}}
\newcommand{\ddp}{\ensuremath{\operatorname{d}\! p}}
\newcommand{\setw}{\mathscr{W}}
\newcommand{\dct}{\textrm{the dominated convergence theorem}\xspace}
\newcommand{\mct}{\textrm{the monotone convergence theorem}\xspace}
\newcommand{\nn}{\nonumber}
\newcommand{\esup}{\ensuremath{\mathrm{ess\:sup\:}}}
\newcommand{\barh}{\overline{H}}
\newcommand{\ep}{\varepsilon}
\newcommand{\opl}{\Phi}
\newcommand{\opll}{\Psi}
\newcommand{\ms}{m_0}
\newcommand{\wg}{\mathbf{g}}
\newcommand{\pw}{\hat{F}}
\newcommand{\Ac}{\mathcal{A}}
\newcommand{\ee}{\mathrm{e}}
\begin{document}

\title{ 
Optimal Moral-hazard-free Reinsurance\\ under Extended Distortion Premium Principles 
}
\author{Zhuo Jin\thanks{Department of Actuarial Studies and Business Analytics, Macquarie University, Australia. Email: zhuo.jin@mq.edu.au.}
\and
Zuo Quan Xu\thanks{Department of Applied Mathematics, The Hong Kong Polytechnic University, Hong Kong, China. Email: maxu@polyu.edu.hk. 
}
\and
Bin Zou\thanks{Department of Mathematics, University of Connecticut, USA. Email: bin.zou@uconn.edu.}
}

\date{This Version: February 27, 2023}
\maketitle

\vspace{-4ex}
\begin{abstract} 
We study an optimal reinsurance problem under a diffusion risk model for an insurer who aims to minimize the probability of lifetime ruin. 
To rule out moral hazard issues, we only consider moral-hazard-free reinsurance contracts by imposing the incentive compatibility constraint on indemnity functions.
The reinsurance premium is calculated under an extended distortion premium principle, in which the distortion function is not necessarily concave. 
We first show that an optimal reinsurance contract always exists and then derive two sufficient and necessary conditions to characterize it. 
Due to the presence of the incentive compatibility constraint and the nonconcavity of the distortion, the optimal contract is obtained as a solution to a double obstacle problem. 
At last, we apply the general result to study three examples and obtain the optimal contract in (semi)closed form. 
\bigskip\\
\noindent
\textbf{Keywords.} Diffusion risk model, double-obstacle problems, incentive compatibility, moral hazard, ruin probability
\bigskip\\
\noindent
\textbf{AMS subject classifications.} 93E20, 91G05
\end{abstract}

\section{Introduction} 
\label{sec:intro}

Reinsurance is a type of insurance by which an insurer transfers some of its risk to a reinsurer. Reinsurance plays a crucial role in the operations of insurers, not only for mitigating risk exposure but also for stabilizing profit and other purposes (see \citet{albrecher2017reinsurance}).
In a reinsurance contract, the insurer receives a payment from the reinsurer in case of a covered loss event, specified by an indemnity function $I: z \ge 0 \mapsto 0 \le I(z) \le z$, 
and in exchange pays a premium to the reinsurer 
under some premium principle, which is a positive functional of an indemnity $I$. 
In this paper, to rule out moral hazard in claim reporting, we impose a condition on $I$, that is, $0 \le I(z) - I(z') \le z - z'$ for all $z \ge z' \ge 0$, which is commonly referred to as the \emph{incentive compatibility} (IC) constraint (see \citet{huberman1983optimal}) or \emph{no-sabotage} condition (see \citet{carlier2003pareto}). 
A reinsurance contract is called \emph{moral-hazard-free} if its indemnity function $I$ satisfies the IC constraint. 
On the pricing side, the reinsurer applies an extended distortion premium principle to calculate the premium rate. Under such a reinsurance setup, the insurer seeks 
an optimal moral-hazard-free reinsurance contract to minimize the probability of lifetime ruin. 

The study of optimal (re)insurance, or its more general version -- risk sharing, is an essential topic in economics, finance, and insurance, dating back to the 
seminal works of \citet{arrow1963uncertainty} and \citet{borch1962equilibrium} (see \citet{albrecher2017reinsurance} and \citet{cai2020optimal} for an overview).
There are several key ingredients in the modeling of optimal reinsurance problems, including the insurer's risk process, admissible (feasible) indemnity functions, reinsurance premium principles, and optimization criteria. The combination of the IC constraint on indemnities and the extended distortion premium principle leads to a challenging optimization problem and distinguishes this paper from the existing literature. 

Given a loss of size $z \ge 0$, $I(z)$ is the amount of loss ceded to the reinsurer, and $H(z) = z - I(z)$ is the amount of loss retained by the insurer. As pointed out in \citet{xu2019optimal} and \citet{boonen2023bowley}, if $H(z)$ has a derivative greater than 1 on some region, 
the insurer is incentivized to exaggerate the loss to the reinsurer; similarly, if $I(z)$ is strictly decreasing on some region, the insurer has an incentive to underreport the loss. Both scenarios are examples of ex post moral hazard. Note that when the IC constraint is imposed, both $I(z)$ and $H(z)$ are increasing functions,\footnote{In this paper, ``increasing'' means non-decreasing, and similarly ``decreasing'' means non-increasing.} with derivatives almost everywhere (a.e.) between 0 and 1. Consequently, the insurer has no incentive to overreport or underreport the loss, and such reinsurance contracts are free of moral hazard. 
The majority of the literature on optimal (re)insurance does not explicitly require the IC constraint on indemnities, but may end with an optimal contract that automatically satisfies this condition. Such a desirable result is mostly due to specific problem formulations, e.g., the expected value premium principle and utility maximization criterion in \citet{arrow1963uncertainty}, and thus should not be taken for granted that it will always hold. 
Indeed, \citet{bernard2015optimal} do not impose the IC constraint and obtain an optimal contract that could lead to ex post moral hazard; see \citet{gajek2004reinsurance} and \citet{bernard2009optimal} for additional examples. 
Therefore, we require the IC constraint when defining admissible indemnity functions (see condition (3) in Definition \ref{def:adm_indem}).

Next, we turn our attention to premium principles. The most common choice is the expected value principle, $c(I) = (1 + \xi) \mathbb{E}[I(Z)]$, in which $Z$ denotes the loss variable, and $\xi \ge 0$ is the loading factor. This principle only considers the first moment and ignores heavy tails; furthermore, it cannot incorporate the subjective beliefs of the reinsurer on the loss distribution into pricing. One solution to these drawbacks is to consider the distortion premium principle proposed by \citet{wang1996premium}, $c(I) = \int_0^\infty \wg ( \BP{I(Z) >z} )\dz$, in which $\wg$ is an increasing and concave function, with $\wg(0) = 0$ and $\wg(1) = 1$, and is called the \emph{distortion} function. 
For instance, if $\wg$ is given by $\wg(p) = p^{1/\rho}$, $\rho \ge 1$ (which recovers the proportional hazard transform principle), the parameter $\rho$ captures the inflation on small probabilities, and the reinsurer's subjective beliefs on the loss can be modeled by $\rho$. 
Here, we further generalize Wang's distortion premium principle by removing concavity assumption on $\wg$ and by incorporating a multiplier $(1 + \theta_0)$, in which $\theta_0 \ge 0$ can be seen as a loading factor.
We call such a principle an \emph{extended} distortion premium principle, which is general enough to include most existing premium principles as special cases (see Remark \ref{rem:premium}). 
A key motivation for considering nonconcave distortions comes from behavioral economics and finance; two classical examples are Quiggin's rank-dependent utility theory (\citet{quiggin1982theory}) and Tversky and Kahneman's cumulative prospect theory (\citet{tversky1992advances}). A central component in both theories is an \emph{inverse-S shaped} probability distortion function, which is concave from 0 up to a turning point and is convex afterwards; this is justified by the empirical evidence that people often overweight small probabilities and underweight large probabilities. 
We highlight that allowing nonconcave distortion functions significantly complicates the study.

With the two key modeling features discussed in detail, we are now ready to introduce the full optimal reinsurance problem investigated in this paper. We consider a representative insurer who purchases moral-hazard-free reinsurance priced under the extended distortion premium principle, and model the insurer's controlled surplus by a diffusion process (see \citet{grandell1991aspects}). The insurer seeks an optimal reinsurance contract that minimizes the probability of lifetime ruin, which is defined as the first hitting time of surplus reaching zero or below. 
Minimizing ruin probability is a popular criterion when studying risk management related problems, such as optimal reinsurance problems for insurers (see \cite{schmidli2001optimal, bayraktar2007minimizing, bayraktar2015stochastic, promislow2005minimizing, liang2020minimizing} for a short list). Such a conservative objective seems particularly appropriate for insurers, since their solvency is critical to societal stability. 

We proceed to summarize the main results of this paper. We first apply the standard dynamic programming principle to analyze the insurer's optimal reinsurance problem and obtain the associated Hamilton-Jacobi-Bellman (HJB) equation in Theorem \ref{thm:HJB}. 
By using the dual distribution $\hat{F}$ of the loss distribution $F$ (defined in \eqref{eq:F_hat}), we transform the original problem of the indemnity function $I$ into an equivalent minimization problem of the retention function $H$ (see Lemma \ref{lem:equi}), whose value function $v$ is defined in \eqref{eq:new_prob}. 
Next, we prove several key analytical properties of $v$ in Lemma \ref{opi1} and formally show that the equivalent problem of $v$ always admits an optimal solution $H^*$ in Theorem \ref{thm:existence}, which completes the first key step of verifying the existence of an optimal indemnity $I^*(z) = z - H^*(z)$ to the insurer's problem.
In the second step, we provide two \emph{sufficient} and \emph{necessary} conditions to characterize the optimal retention $H^*$: in the first characterization, $H^*$ satisfies an integral inequality that holds for all admissible retention functions (see Theorem \ref{thm:op1}); in the second characterization, $H^*$ is a solution to a fully nonlinear ordinary integro-differential equation (OIDE) in \eqref{vi001}, which a double-obstacle problem (see Theorem \ref{thm:op2}). 
In the last step, we consider three examples in which the insurer's optimal contracts 
are obtained in (semi)closed form. In the first example, we investigate the case of no distortion (i.e., $\wg(p) = p$ for all $p \in [0,1]$) and find that the optimal reinsurance contract is of stop-loss type (see Corollary \ref{coro2}). In the second example, we consider a special loss distribution $F$ and a special distortion $\wg$, among other modeling conditions, and show that the optimal contract is of multi-layer type and consists of two equally weighted stop-loss contracts (see Proportion \ref{prop:layer}). The last example assumes that the quantile function of $F$, denoted by $F^{-1}$, has a positive derivative. Even for a general distortion, we are able to obtain the optimal indemnity in a semiclosed form (see Proposition \ref{prop:equivalence}).

The main contributions of this paper are discussed as follows. First, we consider a general, not necessarily concave, distortion function $\wg$ in the extended distortion premium principle. 
The relaxation of nonconcavity makes an essential difference in analysis. 
Under a nonconcave distortion $\wg$, the optimal retention function $H^*$ is characterized as a solution to a double-obstacle problem; 
however, if additional concavity is assumed on $\wg$, one may apply the concave envelope to reduce the double obstacle problem into a single obstacle problem (see Corollary \ref{coro2} for an example). 
Second, we show by delicate analysis that the value function $V$ is exponentially decreasing, as in $V(x) = \ee^{-a^* x}$, and the rate $a^*>0$ is unique and can be obtained easily via solving an algebraic equation (see \eqref{eq:v_star}). 
Third, we obtain an easy-to-verify condition, both sufficient and necessary, for a stop-loss reinsurance to be optimal for the insurer (see Corollary \ref{coro1}). In contrast, the extant literature is only able to provide some sufficient conditions for such an optimal contract.

The literature on optimal (re)insurance is rich, and, as already mentioned, many previous works have studied a similar optimal reinsurance problem but \emph{without} the IC constraint (see, e.g., \cite{schmidli2001optimal, gajek2004reinsurance, promislow2005minimizing, bayraktar2007minimizing, bernard2009optimal, bernard2015optimal, liang2020minimizing}). 
However, with and without such a condition could lead to different optimal contracts; see \citet{boonen2019existence}.
In what follows, we focus on those that also impose the IC constraint on indemnities and note that the main difference lies in the premium principles and optimization criteria. (Another noticeable difference is that nearly all those papers consider a static one-period model, while we study under a dynamic setup.)
In terms of premium principles, the majority assumes the expected-value premium principle; see \citet{xu2019optimal}, \citet{chi2020optimal, chi2022regret}, and \citet{xu2021pareto}. 
With regard to optimization criteria, different risk measures (not the ruin probability) are adopted in the literature, such as distortion risk measures in \citet{boonen2022marginal} and general translation invariant risk measures in \citet{boonen2023bowley}; see \citet{cai2020optimal} for a survey article on optimal reinsurance under various risk measures.
Another popular optimization criterion is based on utility theory, including the standard expected utility theory (EUT) in \citet{chi2020optimal}, regret-based EUT in \citet{chi2022regret}, rank-dependent utility theory in \citet{xu2019optimal} and \citet{xu2021pareto, xu2022moral}, and the one with a general increasing utility in \citet{boonen2019existence}.
In comparison, our objective is to minimize the probability of lifetime ruin by considering a much more general distortion premium principle and by allowing the distortion function to be nonconcave.

The reminder of the paper is organized as follows. In Section \ref{sec:model}, we formulate the insurer's optimal reinsurance problem. In Section \ref{sec:sol}, we solve the problem and characterize the optimal retention function. In Section \ref{sec:exm}, we consider three examples in which the optimal reinsurance contract is obtained in (semi)closed form. Technical proofs are placed in Appendix \ref{sec:proof}.

\section{Model}\label{sec:model}

We fix a filtered probability space $(\Omega, \mathcal{F}, \{\mathcal{F}_t\}_{t\geq 0}, \mathbb{P})$ on which all random subjects introduced hereafter are properly defined.
We consider a representative insurer and model its uncontrolled surplus process $\{R(t)\}_{t \ge 0}$ by the classical Cram\'{e}r-Lundberg (CL) model
\begin{align}\label{exp:surplus1}
R(t)=R(0)+\pi t-\sum_{i=1}^{N(t)}Z_i , 
\end{align}
in which $R(0)>0$ is the insurer's initial reserve, $\pi>0$ denotes the constant premium rate, $N=\{N(t)\}_{t \ge 0}$ is a Poisson process modeling claim frequency, and $\{Z_i\}_{i=1,2,\dots}$ is a series of independent and identically distributed random variables, also independent of $N$, with $Z_i$ denoting the $i$-th loss. 
Let $Z \ge 0$ be a generic random variable of loss (i.e., $Z$ and $Z_i$ have the same distribution), and denote its cumulative distribution function (cdf) by $F$. Without loss of generality, we assume the intensity of the Poisson process $N$ is 1 (i.e., $\BE{N(1)} = 1$).

The insurer can access a reinsurance market to purchase per-loss reinsurance contracts to manage its risk exposure. 
Let $I \triangleq I(z)$ and $H \triangleq H(z)$ denote the indemnity function and the retention function of a reinsurance policy, respectively. 
Upon the reporting of a loss $Z = z$, $I(z)$ is the loss amount ceded to the reinsurer and $H(z)$ is the loss amount retained by the insurer. 
It is clear that $I(z) + H(z) = z$ for all $z \ge 0$. 
As discussed in Section \ref{sec:intro}, we only consider reinsurance policies that are moral-hazard-free, and define admissible policies as follows. 

\begin{defn}
\label{def:adm_indem}
A reinsurance policy is admissible if the associated indemnity function $I$ satisfies the following conditions: 
(1) $I:[0, \infty) \mapsto [0, \infty)$, (2) $I(0) = 0$, and (3) $0 \le I( z) - I(z') \le z - z'$ for all $z \ge z' \ge 0$. 
\end{defn}

Define a set $\Ac$ by
\begin{align}
\label{eq:Ac} \quad 
\Ac \triangleq \Big\{f: [0, \infty) \mapsto [0, \infty) \;\Big| f \mbox{ is absolutely continuous, } f(0) = 0, \mbox{ and $0\leq f'\leq 1$ a.e.}
\Big\}. 
\end{align}
We observe an important equivalence result: a reinsurance policy is admissible $\Longleftrightarrow I \in \Ac \Longleftrightarrow\quad H \in \Ac$ (see \citet{xu2019optimal}).
For every $I \in \Ac$, $H = z - I \in \Ac$, and vice versa. As such, the indemnity and retention functions share the same admissible set $\Ac$, and it is equivalent to consider either $I$ or $H$ in the analysis. 
As will become clear, working with admissible retention functions $H \in \Ac$ is more convenient.

\begin{remark}
The third condition in Definition \ref{def:adm_indem} implies that both admissible indemnity and retention functions are increasing, which rules out ex post moral hazard. Such a condition is frequently referred to as the IC constraint, also called the no-sabotage condition. 
When such a constraint is not imposed, the derived optimal reinsurance contract may be strictly decreasing over some region (see, e.g., \citet{bernard2015optimal} for an example).

Mathematically, the third condition in Definition \ref{def:adm_indem} implies that an admissible indemnity $I$ is Lipschitz continuous with a Lipschitz constant of 1, which in turn implies that $I$ is absolutely continuous and is differentiable a.e. with derivative $I' \in [0,1]$. This explains the definition of the admissible set $\Ac$ in \eqref{eq:Ac}.

In Definition \ref{def:adm_indem}, we consider time-invariant indemnities for two reasons. First, the problem we study is stationary, leading to a time-invariant optimal indemnity function (see Remark \ref{rem:stat} below). As such, restricting to time-invariant indemnities does not lose any generality. Second, although dynamic reinsurance policies are well studied in the literature (see \citet{schmidli2001optimal}), static contracts are dominant in real life (see \citet{albrecher2017reinsurance}). 
\end{remark}

To determine the premium rate of a reinsurance policy, we introduce a set $\setw$ 
\begin{align}
\setw \triangleq \Big\{\wg:[0, 1]\to [0, 1]\; \Big|\; & \mbox{$\wg$ is left-continuous and increasing,} \notag \\
&\mbox{ with $\wg(0)=\wg(0+)=0$ and $\wg(1)=1$}\Big\}.
\label{eq:set_w}
\end{align}
Given any $g \in \setw$, define a nonlinear expectation operator for all non-negative variables $\xi$ by\footnote{Throughout this paper, $\int_a^b=\int_{[a,b)}$.} 
\begin{align}
\label{eq:g_exp}
\BE[\wg]{\xi} \triangleq (1+\theta_0) \int_0^{\infty} \wg \Big( \BP{\xi >z} \Big)\dz, 
\end{align}
in which $\theta_0 > -1$ is a constant parameter, and $\wg \in \setw$ is called a distortion function. 
In this work, we assume that the reinsurer applies the extended distortion premium principle to determine the premium rate, as defined below. 

\begin{defn}
\label{def:premium}
The premium rate of a reinsurance policy with an indemnity function $I \in \Ac$ under the extended distortion premium principle is given by 
\begin{align}
\label{eq:premium}
c(I) \triangleq \BE[\wg]{I(Z)} = (1+\theta_0) \int_0^{\infty} \wg \big( \BP{I(Z) >z} \big)\dz,
\end{align}
in which $\wg \in \setw$ is the reinsurer's distortion function, and $\BE[\wg]{\:\cdot\:}$ is defined by \eqref{eq:g_exp}.
\end{defn}

\begin{remark}
\label{rem:premium}
If $\theta_0 = 0$ and $\wg$ is concave, 
\eqref{eq:premium} becomes the standard distortion premium principle; see \citet{wang1996premium} and \citet{wang1997axiomatic} for its properties. 
If $\wg(p)\equiv p \in \setw$, the premium $c$ in \eqref{eq:premium} reduces to the one under the expected value premium principle. 
\eqref{eq:premium} covers many risk-based premium principles (such as {\rm VaR}, {\rm CVaR}, {\rm Mean-CVaR}) and proportional hazard transform principle. We refer readers to Section 2.1 in \citet{boonen2022marginal} for more discussions on \eqref{eq:premium} with $\theta_0 = 0$ and concrete examples of $\wg$ for various premium principles.
\end{remark}

In what follows, we assume that the reinsurer chooses a distortion $\wg \in \setw$ and this decision is known to the insurer. In consequence, once the insurer chooses its reinsurance indemnity $I$, the policy premium is known to the insurer via \eqref{eq:premium}. For any $I \in \Ac$, let $R^I = \{R^I(t)\}_{t \ge 0}$ denote the insurer's controlled surplus process. Although the uncontrolled surplus $R$ ($R = R^0$ with $I(t) \equiv 0$) is given by the CL model in \eqref{exp:surplus1}, we apply the diffusion approximation for $R^I$ (see \citet{grandell1991aspects} for general theory and \citet{schmidli2001optimal} for a similar application in optimal reinsurance problems). 
Under such a diffusion model, the dynamics of $R^I$ follows
\begin{align}
\label{eq:surplus_R}
\dd R^I(t)=\mu(I)\dt+\sigma (I)\dd B(t),
\end{align}
in which $B = \{B(t)\}_{t \ge 0}$ is a standard Brownian motion, 
\begin{align}
\label{eq:mu_sig}
\mu(I) &\triangleq \pi-c(I)-\BE{Z - I(Z)}, 
\quad \text{and} \quad 
\sigma^2(I)\triangleq \BE{(Z-I(Z))^2} = \BE{H^2(Z)}.
\end{align}

Given an admissible indemnity $I \in \Ac$, define the associated \emph{ruin time} $\tau^I$ by 
\begin{align*}
\tau^I \triangleq\inf \big\{t>0 \;\big|\; R^I(t) \le 0 \big\},
\end{align*}
in which $R^I$ follows the dynamics in \eqref{eq:surplus_R}, with the convention that 
$\inf\emptyset=+\infty$. 
Because $R^I$ is continuous, $R^I(\tau^I)=0$ if ruin happens (i.e., $\tau^I < \infty$). 
The insurer's objective is to minimize the probability of lifetime ruin (or ultimate ruin), which is one of the most popular criteria in optimal reinsurance problems (see \citet{schmidli2001optimal} and \citet{promislow2005minimizing} for early contributions). 
We now formulate the insurer's optimal reinsurance problem as follows.

\begin{prob}
The insurer seeks an optimal reinsurance indemnity $I^*$ that minimizes the ruin probability defined by 
\begin{align}
\label{eq:ruin_prob}
V^I (x)\triangleq\BP{\tau^I < \infty \;\Big|\; R^I(t)=x}, \quad x>0.
\end{align}
Equivalently, the insurer solves the following optimization problem:
\begin{align}
\label{valueV}
V(x)\triangleq\inf _{I \in \Ac} \, V^I (x), \quad x > 0.
\end{align}
We call a solution $I^*$ to \eqref{valueV} an optimal indemnity (or an optimal policy) and $V$ the value function of Problem \eqref{valueV}.
\end{prob}

\begin{remark}
\label{rem:stat}
Although the ruin probability in \eqref{eq:ruin_prob} is conditioned at an arbitrary time $t \ge 0$, it is easy to see that such a conditional probability is time-invariant. For this reason, we write $V^I(x)$, instead of $V^I(t,x)$, for the ruin probability in \eqref{eq:ruin_prob}; the same also applies to the notation of the value fuction $V(x)$. Because of the stationarity of the problem, the optimal indemnity $I^*$, if exists, must be time-invariant as well, even when time-dependent dynamic indemnities $I(t,z)$ are allowed. As such, without of loss of generality, we take $t = 0$ in \eqref{eq:ruin_prob} in the subsequent analysis.
\end{remark}

To rule out trivial cases, 
we impose the following assumption in the rest of this paper. 
\begin{assmp}\label{ass:p1} 
Both the insurance and reinsurance policies are not cheap and satisfy
\begin{align}
\label{eq:asu}
0 < \BE{Z} < \pi < \BE[\wg]{Z}<\infty.
\end{align}
\end{assmp}
Recall that $Z$ represents the insurable loss, $\pi$ is the insurance premium rate charged by the insurer, and $c(I) = \BE[\wg]{I(Z)}$ is the reinsurance premium rate paid by the insurer. The above assumption then rules out two trivial cases: (1) if $\BE{Z} \ge \pi$, $V^I(x) = 1$ for all $I$; (2) if $\pi \ge \BE[\wg]{Z}$, $V^I(x) = 0$ for $I(z) = z$.

\section{Solution}
\label{sec:sol}

This section presents a complete solution to the insurer's problem in \eqref{valueV}. We outline the main idea and steps of obtaining the solution as follows. 
In Section \ref{sub:value}, we characterize the value function $V$ by analyzing an equivalent problem; in Section \ref{sub:op_policy}, we show that an optimal reinsurance contract exists and derive optimality conditions to characterize such a contract.

\subsection{Characterization of the value function}
\label{sub:value}

We first apply the dynamic programming to analyze the insurer's problem in \eqref{valueV} and obtain a characterization result of the value function in Theorem \ref{thm:HJB}. We omit its proof, as it is standard in control theory.

\begin{thm}
\label{thm:HJB}
The value function $V$ to Problem \eqref{valueV} solves the following Hamilton-Jacobi-Bellman (HJB) equation in the viscosity sense:
\begin{align}\label{eq:HJB}
0=\inf_{I \in \Ac} \left[\mu(I)V'(x) + \frac{1}{2}\sigma^2(I)V''(x) \right],
\end{align}
in which $\mu(I)$ and $\sigma^2(I)$ are defined in \eqref{eq:mu_sig}, subject to two boundary conditions
\begin{align}
\label{eq:boundary}
V(0)=1 \quad \text{and} \quad V(\infty)=0.
\end{align}
\end{thm}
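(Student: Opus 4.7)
Since $V$ is not known a priori to be $C^2$, I would prove \eqref{eq:HJB} in the viscosity sense via the usual two-step program: first establish a Dynamic Programming Principle (DPP) for $V$, and then translate the DPP into the sub- and super-solution properties of the HJB equation. The stationarity of the problem (see Remark \ref{rem:stat}) allows me to work with time-homogeneous controls and a one-variable value function, which simplifies the test-function calculus.

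\textbf{Step 1 (DPP).} The plan is to show that for every $x>0$ and every stopping time $\tau$ with $\tau \le \tau^I$,
\begin{equation*}
V(x) = \inf_{I \in \Ac} \BE{V(R^I(\tau))}, \qquad R^I(0)=x.
\end{equation*}
Since the reinsurance premium $c(I)$ in \eqref{eq:premium} depends only on the law of $I(Z)$ and not on time, the diffusion $R^I$ driven by \eqref{eq:surplus_R} is time-homogeneous Markov whenever $I$ is chosen as a deterministic function of the loss variable. This lets me reduce the proof of the DPP to the standard argument for controlled diffusions: decompose the ruin event $\{\tau^I<\infty\}$ using the strong Markov property at $\tau$, and use the admissibility of concatenating a policy up to $\tau$ with any $\tilde I \in \Ac$ afterwards. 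The pasting is clean because $\Ac$ in \eqref{eq:Ac} is preserved under such replacement and does not involve a time index.

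\textbf{Step 2 (Viscosity solution).} To obtain the sub-solution property, I would fix $x_0>0$ and a test function $\varphi \in C^2$ with $V-\varphi$ attaining a local maximum at $x_0$ (with $\varphi(x_0)=V(x_0)$), pick an arbitrary constant control $I \in \Ac$, apply It\^o's formula to $\varphi(R^I(t\wedge\tau_\varepsilon))$ for a small exit time $\tau_\varepsilon$, take expectations, combine with the DPP, divide by a vanishing time scale, and let $\varepsilon\downarrow 0$ to obtain
\begin{equation*}
\mu(I)\varphi'(x_0)+\tfrac{1}{2}\sigma^2(I)\varphi''(x_0)\ge 0,
\end{equation*}
whence $\inf_{I\in\Ac}[\cdots]\le 0$ holds with $\varphi$ replacing $V$ at $x_0$. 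The super-solution inequality is symmetric: choose $\varphi$ touching $V$ from below, use a near-optimal control from the DPP to get an upper estimate on the generator, and infer $\inf_{I\in\Ac}[\cdots]\ge 0$. The infimum commutes with these limits because $I\mapsto\mu(I)$ and $I\mapsto\sigma^2(I)$ are continuous on $\Ac$ equipped with pointwise (or $L^1(F)$) convergence, using Assumption \ref{ass:p1} to guarantee the integrals defining $\mu(I)$ and $\sigma^2(I)$ are finite.

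\textbf{Step 3 (Boundary conditions).} The condition $V(0)=1$ follows because $R^I(0)=0$ means ruin already occurs at $t=0$, so $\tau^I=0$ and $V^I(0)=1$ for every $I\in\Ac$. For $V(\infty)=0$, I would exhibit an admissible policy for which the ruin probability tends to zero as $x\to\infty$: the no-reinsurance choice $I\equiv 0$ yields drift $\mu(0)=\pi-\BE{Z}>0$ by Assumption \ref{ass:p1}, and for a Brownian motion with positive drift the ruin probability decays exponentially in the initial wealth, giving $V(x)\le V^0(x)\to 0$.

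\textbf{Main obstacle.} The step I expect to require the most care is justifying continuity of $I\mapsto(\mu(I),\sigma^2(I))$ in a topology under which $\Ac$ is compact, so that the infimum in \eqref{eq:HJB} is attained and the usual measurable-selection arguments in the DPP go through; the distortion premium $c(I)$ is only lower-semicontinuous in general because $\wg$ need not be concave, and extra attention is needed when passing to the limit inside the infimum.
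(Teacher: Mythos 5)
The paper gives no proof of Theorem \ref{thm:HJB}: it is explicitly omitted as ``standard in control theory,'' and your two-step program (dynamic programming principle, then translation into viscosity sub- and super-solution properties via test functions, plus the direct verification of the boundary conditions) is exactly the standard argument being invoked, so there is nothing to compare against at the level of substance. One slip worth fixing in Step 2: from the ``for every fixed $I$'' half of the DPP and a test function $\varphi$ with $V-\varphi$ attaining a local maximum at $x_0$, you correctly derive $\mu(I)\varphi'(x_0)+\tfrac{1}{2}\sigma^2(I)\varphi''(x_0)\ge 0$ for all $I$, but this yields $\inf_{I\in\Ac}[\cdots]\ge 0$, not $\le 0$ as you wrote; symmetrically, the near-optimal-control half with $\varphi$ touching $V$ from below gives $\inf_{I\in\Ac}[\cdots]\le 0$. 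The two conclusions are simply swapped, and together they still give the equation, so this is a labeling error rather than a gap. Two smaller points: for $V(\infty)=0$ the choice $I\equiv 0$ requires $\sigma^2(0)=\BE{Z^2}<\infty$, which Assumption \ref{ass:p1} does not guarantee; a safer comparison policy is a stop-loss $I(z)=(z-d)^+$ with $d$ large, for which the retained loss is bounded and the drift is positive. Also, your worry that $c(I)$ is only lower-semicontinuous is unnecessary: by the representation $c(I)=(1+\theta)\int_0^\infty I(z)\,\dd\pw(z)$ in \eqref{eq:new_c}, $c$ is linear in $I$ and continuous under pointwise convergence on $\Ac$ by dominated convergence, so the continuity of $I\mapsto(\mu(I),\sigma^2(I))$ needed to pass limits inside the infimum is not the obstacle you anticipate.
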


With the help of Theorem \ref{thm:HJB}, our objective is to find a classical solution for the value function $V$ satisfying \eqref{eq:HJB}-\eqref{eq:boundary}. 
To that end, we make an educated guess on the ansatz 
\begin{align}
\label{eq:ansatz}
\hat{V} (x) = \ee^{- a^* x}, \quad x > 0, 
\end{align}
in which $a^*$ is a positive constant yet to be determined. 
Substituting the above ansatz into the HJB \eqref{eq:HJB} yields 
\begin{align}\label{eq:HJB2}
\inf_{I \in \Ac} \; \left\{ \frac{a^*}{2} \sigma^2(I)-\mu(I) \right\} =0.
\end{align}

To study \eqref{eq:HJB2}, we first introduce a new function $\pw$: 
\begin{align}
\label{eq:F_hat}
\pw(z)\triangleq 1-\frac{\wg(1-F(z))}{\wg(1-F(0))},
\end{align}
in which $\wg \in \setw$ is the distortion function chosen by the reinsurer, and $F$ is the cdf of the loss $Z$. 
We call $\pw$ the \emph{dual distribution} of $F$ under the distortion $\wg$. 
By recalling the definition of $\setw$, we see that $\pw$ is right-continuous and increasing, with $\pw(0+)=\pw(0)=0$ and $\pw(\infty)=1$. 
As a result, $\pw$ can be regarded as the cdf of some positive random variable.

\begin{remark}
If $\wg(1-F(0)) = 0$, we have $\wg(1-F(z))=0$ for all $z \geq 0$, by recalling the monotonicity and non-negativity of $\wg$. This then implies $ \BE[\wg]{Z}=0$ by \eqref{eq:g_exp}, which contradicts the assumption in \eqref{eq:asu}. As such, $\wg(1-F(0)) > 0$ holds, justifying the definition of $\pw$ in \eqref{eq:F_hat}.
In the behavioral finance literature, $\pw(z)$ is usually defined as $1- \wg(1-F(z))$. This is consistent with our definition if $F(0)=0$. But in actuarial applications, it is common that $F(0)>0$; to account for this fact, we modify the standard definition to \eqref{eq:F_hat}. 
\end{remark}

Define $v: (0, \infty) \mapsto \mathbb{R}$ by 
\begin{align}\label{opi1}
v(a) \triangleq \inf_{H \in \Ac} \left\{ \int_0^{\infty}\Big[\frac{a}{2} \, H^2(z)+H(z)\Big]\dd F(z)-(1+\theta)\int_0^{\infty}H(z)\dd\pw(z) \right\}, 
\end{align}
in which $H$ denotes the retention function of a reinsurance policy, $\pw$ is defined in \eqref{eq:F_hat}, and 
\begin{align}
\label{eq:theta}
\theta\triangleq (1+\theta_0) \, \wg(1-F(0))-1>-1.
\end{align}
Since $H(0) = 0$ for any $H \in \Ac$, we can change $\int_0^\infty$ to $\int_{(0, \infty)}$ for both integrals in \eqref{opi1}. This allows us to ``ignore'' the zero point $z=0$ in the subsequent analysis, even though the loss $Z$ may have a probability mass at 0 (i.e., $F(0) > 0$).

To solve the reduced HJB equation in \eqref{eq:HJB2}, we present several lemmas.

\begin{lemma}
\label{lem:equi}
The HJB equation in \eqref{eq:HJB2} is equivalent to 
\begin{align}
\label{eq:v_star}
v( a^* )=\pi-(1+\theta) \int_0^{\infty} z\dd\pw(z),
\end{align}
in which $\pi$ is the insurer's premium rate in \eqref{exp:surplus1}, $a^*$ is the same as in \eqref{eq:ansatz}, and $\pw$, $v$, and $\theta$ are defined by \eqref{eq:F_hat}, \eqref{opi1}, and \eqref{eq:theta}, respectively.
\end{lemma}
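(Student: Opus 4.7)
The plan is to translate the HJB equation \eqref{eq:HJB2}, which is stated as an infimum over indemnities $I\in\Ac$, into an equivalent infimum over retention functions $H\in\Ac$ via the substitution $I(z)=z-H(z)$, and to recognize the resulting objective (up to a constant independent of $H$) as the functional defining $v(a^*)$ in \eqref{opi1}. Matching this constant with $\pi$ then yields \eqref{eq:v_star}.

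First, I would rewrite the coefficients in \eqref{eq:HJB2} in terms of $H$. Directly, $\sigma^2(I)=\int_0^\infty H^2(z)\dd F(z)$ and $\BE{Z-I(Z)}=\int_0^\infty H(z)\dd F(z)$. The nontrivial step is the reinsurance premium $c(I)=\BE[\wg]{I(Z)}$. Because $I(Z)$ and $H(Z)$ are non-negative, comonotonic (both are non-decreasing functions of $Z$), and sum to $Z$, the comonotonic additivity of the distortion functional $\BE[\wg]{\cdot}$ gives $c(I)=\BE[\wg]{Z}-\BE[\wg]{H(Z)}$. Equivalently, since $H\in\Ac$ is absolutely continuous and increasing, a direct change of variable in \eqref{eq:g_exp} yields
\begin{align*}
\BE[\wg]{H(Z)}=(1+\theta_0)\int_0^\infty\wg\bigl(1-F(y)\bigr)\dd H(y),
\end{align*}
and the analogous formula at $H(z)=z$ then produces the same decomposition by subtraction.

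Next, I would invoke the definition \eqref{eq:F_hat} of $\pw$, which gives $\wg(1-F(y))=\wg(1-F(0))(1-\pw(y))$, together with $(1+\theta_0)\wg(1-F(0))=1+\theta$ from \eqref{eq:theta}, to rewrite $\BE[\wg]{H(Z)}=(1+\theta)\int_0^\infty(1-\pw(y))\dd H(y)$. An integration by parts converts this into $(1+\theta)\int_0^\infty H(y)\dd\pw(y)$, the boundary term at infinity being zero because $H(y)\le y$ (as $H\in\Ac$) and $\int_0^\infty(1-\pw(y))\dy=\BE[\wg]{Z}/(1+\theta)<\infty$ under \citeassmp{ass:p1}, which together with the monotonicity of $1-\pw$ forces $y(1-\pw(y))\to 0$. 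Specializing to $H(z)=z$ (i.e., $I\equiv 0$) also gives $\BE[\wg]{Z}=(1+\theta)\int_0^\infty z\dd\pw(z)$.

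Assembling these identities, one obtains for every $a>0$
\begin{align*}
\tfrac{a}{2}\sigma^2(I)-\mu(I)=\int_0^\infty\Bigl[\tfrac{a}{2}H^2(z)+H(z)\Bigr]\dd F(z)-(1+\theta)\int_0^\infty H(z)\dd\pw(z)+(1+\theta)\int_0^\infty z\,\dd\pw(z)-\pi.
\end{align*}
The last two terms are independent of $H$, so taking the infimum over $I\in\Ac$ --- equivalently, over $H\in\Ac$ --- turns \eqref{eq:HJB2} into $v(a^*)+(1+\theta)\int_0^\infty z\,\dd\pw(z)-\pi=0$, which is \eqref{eq:v_star}. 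The main obstacle is making the change-of-variable formula for $\BE[\wg]{H(Z)}$ rigorous for general $H\in\Ac$, since $H$ may be constant on intervals of positive measure and hence not strictly increasing; this can be handled cleanly either by a Fubini argument applied to $H(Z)=\int_0^Z H'(s)\ds$ or by the observation that plateaus contribute no mass under $\dd H$, so the formal substitution $z=H(y)$ is legitimate.
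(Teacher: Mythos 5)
Your proposal is correct and takes essentially the same route as the paper: both arguments reduce to the identity $\BE[\wg]{f(Z)}=(1+\theta)\int_0^{\infty} f(z)\dd\pw(z)$ for increasing $f$ with $f(0)=0$ (the paper establishes it by the same change of variables for $f'>0$ followed by a limit argument, you by integration by parts against $\dd\pw$ or, equivalently, comonotonic additivity of the Choquet integral), after which the bookkeeping that isolates the $H$-independent constant $(1+\theta)\int_0^{\infty} z\dd\pw(z)-\pi$ is identical. The only cosmetic difference is that the paper applies the identity to $I$ directly, whereas you apply it to $H$ and to the identity function and subtract.
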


\begin{proof}
For any function $f$ such that $f' > 0$ and $f(0) = 0$, we have 
\begin{align*} 
\int_0^{\infty} \wg(\BP{f(Z)>z})\dz
&=\int_0^{\infty} \wg(\BP{f(Z)>f(z)}) \dd f(z)
=\int_0^{\infty} \wg(1 - F(z))f'(z)\dz\\
&=\wg(1-F(0))\int_0^{\infty} (1-\pw(z))f'(z)\dz
=\wg(1-F(0))\int_0^{\infty} f(z)\dd\pw(z),
\end{align*}
which, combining with \eqref{eq:g_exp} and \eqref{eq:theta}, implies 
\begin{align*}
\BE[\wg]{f(Z)}&=(1+\theta_0) \int_0^{\infty} \wg(\BP{f(Z)>z})\dz=(1+\theta) \int_0^{\infty} f(z)\dd\pw(z).
\end{align*} 
By a limit argument, one can show that the above holds for any increasing functions $f$ with $f(0)=0$ (the set of such functions is larger than $\Ac$ in \eqref{eq:Ac}). 
In particular, for any $I \in \Ac$, 
\begin{align}
\label{eq:new_c}
c(I)=\BE[\wg]{I(Z)}=(1+\theta) \int_0^{\infty} I(z) \dd\pw(z) < \infty.
\end{align}
The desired equivalence result then follows by recalling the definitions of $\mu$ and $\sigma$ in \eqref{eq:mu_sig}.
\end{proof}

\begin{lemma}
\label{lem:v}
The function $v$ defined by \eqref{opi1} is bounded, continuous, strictly increasing, and concave on $(0, \infty)$. Furthermore, it satisfies 
\begin{align}\label{lim1} 
-(1+\theta) \int_0^{\infty} z\dd\pw(z)\leq \lim_{a\to 0+}v(a) 
\leq \int_0^{\infty} z\dd F(z)-(1+\theta)\int_0^{\infty} z\dd\pw(z) < 0, 
\end{align}
and
\begin{align}\label{lim2} 
\lim_{a\to+\infty}v(a)=0.
\end{align}
\end{lemma}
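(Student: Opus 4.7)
The plan is to establish the six properties in the natural order of logical dependence. Write
\[
J(a,H) := \int_0^\infty \Big[\tfrac{a}{2}H^2(z)+H(z)\Big]\,dF(z) - (1+\theta)\int_0^\infty H(z)\,d\pw(z),
\]
so that $v(a)=\inf_{H\in\Ac} J(a,H)$, and note the key structural fact that for each fixed $H\in\Ac$ the map $a\mapsto J(a,H)$ is affine in $a$ with non-negative slope $\tfrac{1}{2}\int H^2\,dF$. From this alone I get concavity, weak monotonicity, and, via the classical fact that a real-valued concave function on an open interval is continuous, continuity of $v$ on $(0,\infty)$. For boundedness, $H\equiv 0\in\Ac$ gives $J(a,0)=0$ and hence $v\le 0$, while the admissibility property $H(z)\le z$ (immediate from $H(0)=0$ and $H'\le 1$) together with discarding the non-negative first two terms yields $v(a)\ge -(1+\theta)\int_0^\infty z\,d\pw(z)$, which is finite by \citeassmp{ass:p1}.

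For the limit $v(0+)$, the lower bound in \eqref{lim1} is the uniform bound just derived. For the upper bound I would test with $H(z)=z\wedge M$, send $a\to 0+$ first (killing the $\tfrac{a}{2}\int(z\wedge M)^2\,dF$ term) and then $M\to\infty$, using monotone convergence together with the finiteness of $\BE{Z}$ and of $(1+\theta)\int z\,d\pw(z)=\BE[\wg]{Z}$ to conclude $v(0+)\le \BE{Z}-\BE[\wg]{Z}<0$ by \citeassmp{ass:p1}.

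The main obstacle is proving \eqref{lim2}, namely $v(\infty)=0$. Since $v$ is weakly increasing and $\le 0$, the limit $L:=v(\infty)$ exists in $\bigl[-(1+\theta)\int z\,d\pw,\,0\bigr]$, and I need to rule out $L<0$. For each $n$ I would pick a near-minimizer $H_n\in\Ac$ with $J(n,H_n)\le v(n)+1/n$, and write $Y(H):=\int H\,dF-(1+\theta)\int H\,d\pw$ so that $J(a,H)=\tfrac{a}{2}\int H^2\,dF+Y(H)$. Because $Y(H_n)\ge -(1+\theta)\int z\,d\pw$ uniformly, the estimate $J(n,H_n)\le 1/n$ forces $\int H_n^2\,dF\to 0$. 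Equi-Lipschitz continuity with constant $1$ and $H_n(0)=0$ allow me to invoke Arzel\`a--Ascoli and extract a subsequence $H_{n_k}$ converging uniformly on compacts to some $H_\infty\in\Ac$; uniform convergence on $[0,M]$ together with $\int H_n^2\,dF\to 0$ then gives $\int_0^M H_\infty^2\,dF=0$ for every $M$, hence $H_\infty\equiv 0$ on $\operatorname{supp}(F)$ by continuity. The crucial observation is that $\pw$ is constant on every interval on which $F$ is constant (directly from \eqref{eq:F_hat}), so $\operatorname{supp}(\pw)\subseteq\operatorname{supp}(F)$; thus $\int H_\infty\,dF=\int H_\infty\,d\pw=0$ and $Y(H_\infty)=0$. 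A standard $\varepsilon$--tail argument using $\int_M^\infty z\,dF+\int_M^\infty z\,d\pw \to 0$ as $M\to\infty$ combined with uniform convergence on $[0,M]$ upgrades this to $Y(H_{n_k})\to 0$. Then $\tfrac{n_k}{2}\int H_{n_k}^2\,dF=J(n_k,H_{n_k})-Y(H_{n_k})\to L$, and non-negativity of the left-hand side forces $L\ge 0$, contradicting $L<0$.

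Finally, for strict monotonicity I would first show $v(a)<0$ for every $a>0$ by testing $H(z)=\lambda(z\wedge M)$ with $M$ large enough that $\int(z\wedge M)\,dF-(1+\theta)\int(z\wedge M)\,d\pw<0$ (from \citeassmp{ass:p1} via monotone convergence) and then $\lambda>0$ small enough that the linear-in-$\lambda$ term dominates the quadratic one. Strict monotonicity then follows from a standard concavity argument: if $v(a_1)=v(a_2)$ for some $0<a_1<a_2$, concavity together with weak monotonicity forces $v$ to be constant on $[a_1,\infty)$, hence equal to $v(\infty)=0$, contradicting $v(a_1)<0$.
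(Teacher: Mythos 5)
Your proof is correct, and for most of the lemma (concavity and monotonicity from the affine-in-$a$ structure, continuity from concavity, the bounds in \eqref{lim1} via the test functions $H\equiv 0$ and $H=z\wedge M$, negativity of $v$ via $H=\lambda(z\wedge M)$ with $\lambda$ small, and strict monotonicity by contradiction from concavity plus $v(\infty)=0$) it runs essentially parallel to the paper's. The genuine divergence is in the proof of \eqref{lim2}, which is also the hardest part. The paper proceeds analytically: for continuous, piecewise differentiable $\wg$ it writes $\dd\pw=\beta(z)\dd F/(1+\theta)$ and minimizes the integrand pointwise in $H(z)\in[0,z]$ (completing the square), obtaining an explicit lower bound $-\int_0^\infty \frac{2z\beta^2}{\sqrt{2az\beta+(1+\beta)^2}+1+\beta}\dd F$ that tends to $0$ by dominated convergence; general $\wg\in\setw$ are then handled by comparison with the concave envelope $g_0$, which only lowers the objective. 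You instead argue variationally: near-minimizers $H_n$ at level $a=n$ must satisfy $\int H_n^2\dd F=O(1/n)$, an Arzel\`a--Ascoli limit $H_\infty$ then vanishes on $\operatorname{supp}(\dd F)$, and the observation that $\pw$ is constant wherever $F$ is (so $\operatorname{supp}(\dd\pw)\subseteq\operatorname{supp}(\dd F)$) plus a tail estimate gives $Y(H_{n_k})\to 0$ and hence $L=\lim \frac{n_k}{2}\int H_{n_k}^2\dd F\ge 0$. Your route avoids both the pointwise quadratic minimization and the concave-envelope reduction, treating general $\wg$ in one stroke, at the cost of importing the compactness machinery (which the paper reserves for the existence result, Theorem \ref{thm:existence}); the paper's route yields a quantitative rate for $v(a)\to 0$, which yours does not. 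Both are sound.
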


\begin{proof}
See Appendix \ref{sec:proof} for the proof.
\end{proof}

\begin{lemma}
\label{lem:a_star}
There exists a unique $a^* \in (0, \infty)$ such that \eqref{eq:v_star} holds. 
\end{lemma}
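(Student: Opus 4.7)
The plan is to show that the right-hand side of \eqref{eq:v_star}, call it $K \triangleq \pi - (1+\theta)\int_0^\infty z \dd\pw(z)$, lies strictly between the two limits of $v$ computed in Lemma \ref{lem:v}, and then invoke continuity and strict monotonicity of $v$ to conclude existence and uniqueness of $a^*$.

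First I would rewrite $K$ in a form that meshes with Assumption \ref{ass:p1}. Taking $f(z)=z$ (an increasing function with $f(0)=0$) in the identity $\BE[\wg]{f(Z)}=(1+\theta)\int_0^\infty f(z)\dd\pw(z)$ derived in the proof of Lemma \ref{lem:equi}, we get $(1+\theta)\int_0^\infty z\dd\pw(z)=\BE[\wg]{Z}$. Hence
\begin{align*}
K = \pi - \BE[\wg]{Z}.
\end{align*}
By Assumption \ref{ass:p1}, $\pi < \BE[\wg]{Z}$, so $K<0=\lim_{a\to+\infty}v(a)$ by \eqref{lim2}.

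Next I would compare $K$ with $\lim_{a\to 0+}v(a)$. By \eqref{lim1},
\begin{align*}
\lim_{a\to 0+}v(a) \le \int_0^\infty z\dd F(z)-(1+\theta)\int_0^\infty z\dd\pw(z)=\BE{Z}-\BE[\wg]{Z}.
\end{align*}
Since $\pi>\BE{Z}$ by Assumption \ref{ass:p1},
\begin{align*}
K=\pi-\BE[\wg]{Z} > \BE{Z}-\BE[\wg]{Z} \ge \lim_{a\to 0+}v(a).
\end{align*}

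Finally, Lemma \ref{lem:v} tells us that $v$ is continuous and strictly increasing on $(0,\infty)$ with $\lim_{a\to 0+}v(a)<K<\lim_{a\to+\infty}v(a)=0$. By the intermediate value theorem there exists $a^*\in(0,\infty)$ with $v(a^*)=K$, and strict monotonicity forces this $a^*$ to be unique. The argument is essentially routine once Lemma \ref{lem:v} is in hand; the only delicate point is making sure Assumption \ref{ass:p1} yields strict (rather than weak) inequalities on both sides of $K$, which is what guarantees that $a^*$ is an \emph{interior} point of $(0,\infty)$ rather than a boundary limit.
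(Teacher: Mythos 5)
Your proof is correct and follows essentially the same route as the paper: the paper likewise rewrites Assumption \ref{ass:p1} via \eqref{eq:new_c} with $I(z)=z$ to place $\pi-(1+\theta)\int_0^\infty z\dd\pw(z)$ strictly between $\lim_{a\to 0+}v(a)$ and $\lim_{a\to+\infty}v(a)=0$, and then invokes the continuity and strict monotonicity of $v$ from Lemma \ref{lem:v}. Your write-up simply makes the intermediate-value and strict-inequality steps explicit, which the paper leaves implicit.
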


\begin{proof}
By using \eqref{eq:new_c} with $I(z) = z$, the standing assumption in \eqref{eq:asu} can be rewritten as 
\begin{align}
\label{eq:new_asu}
0 < \int_0^\infty \, z \dd F(z) < \pi < (1 + \theta) \int_0^\infty \, z \dd \pw(z),
\end{align}
which, along with Lemma \ref{lem:v}, proves the assertion.
\end{proof}

With the above preparations, we come to the conclusion that the ansatz in \eqref{eq:ansatz} is a candidate for the value function $V$ of Problem \eqref{valueV}, pending on standard verification arguments, as unfolded in the next theorem.

\begin{thm}
\label{thm:veri}
Let $a^*$ be the unique solution to \eqref{eq:v_star} established in Lemma \ref{lem:a_star}. We have
\begin{align}
V(x) \ge \hat{V}(x) = \ee^{- a^* x}, \quad x > 0.
\end{align}
Moreover, if the infimum problem in $v(a^*)$, defined by \eqref{opi1} with $a = a^*$, admits an optimizer $H^* \in \Ac$, then 
\begin{align}
\label{eq:V_equal}
V(x) = V^{I^*}(x) = \ee^{- a^* x}, \quad x > 0,
\end{align}
in which $I^*$, with $I ^*(z) \triangleq z - H^*(z)$ for all $z \ge 0$, is an optimal indemnity to Problem \eqref{valueV}. 
\end{thm}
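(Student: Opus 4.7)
The plan is a standard verification argument split into two parts, matching the two assertions of the theorem.

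For the lower bound $V\ge \hat V$, I will show that $V^I\ge \hat V$ for every admissible $I\in\Ac$ and then take the infimum. Fix $I\in\Ac$ and apply Ito's formula to $\hat V(R^I(t))=\ee^{-a^* R^I(t)}$: its drift equals $a^*\bigl[\tfrac{a^*}{2}\sigma^2(I)-\mu(I)\bigr]\ee^{-a^* R^I(t)}$, which is non-negative because \eqref{eq:v_star} is equivalent, via Lemma~\ref{lem:equi}, to the pointwise HJB inequality $\tfrac{a^*}{2}\sigma^2(I)-\mu(I)\ge 0$. Since $\hat V$ and $\hat V'$ are bounded on $[0,\infty)$, the stopped process $\hat V(R^I(t\wedge\tau^I))$ is a bounded true submartingale, so optional stopping yields $\hat V(x)\le \BE{\hat V(R^I(t\wedge\tau^I))}$. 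To close the argument as $t\to\infty$, I need $\hat V(R^I(t))\to 0$ almost surely on $\{\tau^I=\infty\}$: when $\mu(I)>0$, this follows from the strong law of large numbers applied to $R^I(t)=x+\mu(I)t+\sigma(I)B(t)$, giving $R^I(t)\to\infty$ a.s. Combined with $\hat V(R^I(\tau^I))=\hat V(0)=1$ on $\{\tau^I<\infty\}$, bounded convergence then delivers $\hat V(x)\le \BP{\tau^I<\infty}=V^I(x)$. When $\mu(I)\le 0$, ruin is almost sure (negative drift when $\mu(I)<0$, Brownian recurrence when $\mu(I)=0$ and $\sigma^2(I)>0$; the case $\sigma^2(I)=0$ forces $I(z)=z$ and hence $\mu(I)=\pi-\BE[\wg]{Z}<0$ by Assumption~\ref{ass:p1}, contradicting $\mu(I)=0$), so $V^I\equiv 1\ge \hat V$. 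Taking the infimum over $I$ gives $V\ge \hat V$.

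For the matching upper bound under the hypothesis that an optimizer $H^*\in\Ac$ of \eqref{opi1} at $a=a^*$ exists, I set $I^*(z)\triangleq z-H^*(z)\in\Ac$. By construction $I^*$ attains the infimum in \eqref{eq:HJB2}, upgrading the HJB inequality to the equality $\tfrac{a^*}{2}\sigma^2(I^*)=\mu(I^*)$. The only step requiring care is to rule out $\sigma^2(I^*)=0$: otherwise $H^*(Z)=0$ a.s., hence $I^*(Z)=Z$ a.s., and $\mu(I^*)=\pi-\BE[\wg]{Z}<0$ by Assumption~\ref{ass:p1}, which contradicts $\mu(I^*)=0$. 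Thus $\sigma^2(I^*)>0$ and $\mu(I^*)>0$, so the drift of $\ee^{-a^* R^{I^*}(t)}$ vanishes and $\ee^{-a^* R^{I^*}(t\wedge\tau^{I^*})}$ is a bounded true martingale. Optional stopping together with the same SLLN-based limit argument then gives $V^{I^*}(x)=\ee^{-a^* x}=\hat V(x)$. Combining with the lower bound, $\hat V(x)\le V(x)\le V^{I^*}(x)=\hat V(x)$, which is precisely \eqref{eq:V_equal} and establishes the optimality of $I^*$.

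The principal obstacle is closing the submartingale bound on $\{\tau^I=\infty\}$: one must argue that $\hat V(R^I(t))\to 0$ almost surely there, since otherwise the stopped expectation is only bounded by $1$ and the inequality $\hat V(x)\le V^I(x)$ cannot be read off. The SLLN via $\mu(I)>0$ provides exactly this; the complementary case $\mu(I)\le 0$ is handled separately and is trivial because ruin is certain. Everything else is routine Ito and optional-stopping bookkeeping, and the upper-bound part reduces to verifying that the HJB inequality collapses to equality at $I^*$, which is immediate from the optimality of $H^*$ in \eqref{opi1}.
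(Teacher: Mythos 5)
Your proof is correct, and it follows the same verification strategy as the paper: It\^o's formula applied to $\ee^{-a^* R^I(t)}$, the sign of the drift controlled by the reduced HJB \eqref{eq:HJB2} via Lemma \ref{lem:equi}, and equality at $I^*$ because $H^*$ attains the infimum in \eqref{opi1}. The technical execution of the limit passage differs, though, and in a way worth noting. The paper localizes with stopping times $\nu_n$ chosen to make the stochastic integral a true martingale and then invokes the dominated convergence theorem to pass $\mathbb{E}[\ee^{-a^* R^I(\nu_n)}]$ to $\mathbb{P}(\tau^I<\infty)$; this step silently uses that $R^I(\nu_n)\to\infty$ on $\{\tau^I=\infty\}$, which is not justified there. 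You instead observe that the stopped process is a \emph{bounded} local submartingale (the integrand of the stochastic integral is bounded by $\sigma(I)$ before ruin), hence a true submartingale, and you supply the missing transversality argument explicitly: when $\mu(I)>0$ the strong law of large numbers forces $R^I(t)\to\infty$ on $\{\tau^I=\infty\}$, and when $\mu(I)\le 0$ ruin is almost sure so the bound is trivial. You also verify, using Assumption \ref{ass:p1}, that $\sigma^2(I^*)>0$ and hence $\mu(I^*)>0$ for the candidate optimizer, which the paper leaves implicit. So your write-up is essentially the paper's proof with the one genuinely delicate step (behavior on the no-ruin event) made rigorous; the only caveat common to both arguments is that $\sigma^2(I)=\BE{H^2(Z)}$ is tacitly assumed finite for the contracts under consideration.
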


\begin{proof}
For any admissible $I \in \Ac$, define stopping times $\{ \nu_n \}_{n=1,2,\cdots}$ by 
\[\nu_n \triangleq \min\Big\{\tau^I, \quad\inf\Big\{t > 0: \int_0^{t}\big(|\mu(I)|^2 \ee^{-2 a^* R^I(s)}+1\big)\ds>n \Big\}\Big\}.\]
Observe that $\nu_n<\infty$ and $\nu_n \to \tau^I $ as $n\to\infty$. 
By applying It\^o's lemma to $\exp (-a^* R^I(\cdot) )$ and using \eqref{eq:HJB2}, we obtain
\begin{align*} 
\ee^{-a^* R^I(\nu_n)}
&\geq \ee^{-a^* R^I(0)} - a^* \int_0^{\nu_n} \mu(I) \, \ee^{- a^* R^I(s)}\dd B(s).
\end{align*}Applying \dct to the above inequality gives 
\begin{align*}
	\ee^{-a^* x} & \le \lim_{n \to \infty} \, \mathbb{E} \left[ \ee^{-a^* R^I(\nu_n)}\;\big|\; R^I(0) = x \right] = \lim_{n \to \infty} \, \mathbb{E} \left[ \ee^{-a^* R^I(\nu_n)} \, \mathrm{1}_{\nu_n < \infty}\;\big|\; R^I(0) = x \right] \\
	&= \mathbb{E} \left[ \ee^{-a^* R^I(\tau^I)} \, \mathrm{1}_{\tau^I < \infty}\;\big|\; R^I(0) = x \right] 
	=\mathbb{P} \left[ \tau^I < \infty\;\big|\; R^I(0) = x \right] = V^I(x)
\end{align*}
for all $I \in \Ac$, which implies $V(x) \ge \hat{V}(x) = \ee^{- a^* x}$.
Since the above inequality becomes an equality at $I^*$, and $\ee^{- a^* x}$ satisfies all the conditions in Theorem \ref{thm:HJB}, the second result then follows. 
\end{proof}

By Lemma \ref{lem:v}, $v(a) < 0$ for all $a \in (0, \infty)$; by the definition of $v$ in \eqref{opi1}, when taking $H \equiv 0 \in \Ac$, the right hand side in \eqref{opi1} is 0. 
As a result, $H(z) \equiv 0$ is \emph{not} optimal, or equivalently, the full reinsurance with $I(z) = z$ is \emph{not} optimal to the insurer.

\subsection{Characterization of the optimal contracts}
\label{sub:op_policy}

The analysis in Section \ref{sub:value} provides a complete characterization of the value function $V$ in \eqref{valueV}, but says nothing about optimal indemnity and retention functions. However, thanks to Lemma \ref{lem:equi}, 
we do know that the original insurer's problem in \eqref{valueV} and the following problem
\begin{align}
\label{eq:new_prob}
v(a^*) = \inf_{H \in \Ac} \left\{ \int_0^{\infty}\Big[\frac{a^*}{2} \, H^2(z)+H(z)\Big]\dd F(z)-(1+\theta)\int_0^{\infty}H(z)\dd\pw(z) \right\}
\end{align}
have the same optimal solution(s). 
The next theorem presents an existence result to the problem in \eqref{opi1}, a more general version of Problem \eqref{eq:new_prob}. This result, along with Theorem \ref{thm:veri}, immediately verifies that the value function $V$ of the insurer's problem is indeed given by \eqref{eq:V_equal}. 
To improve readability, we place length technical proofs of this subsection in Appendix \ref{sec:proof}.

\begin{thm} 
\label{thm:existence}
There exists at least one optimal solution to the infimum problem in \eqref{opi1}.
\end{thm}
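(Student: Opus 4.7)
The plan is to prove existence by the direct method of the calculus of variations, exploiting the compactness built into the $1$-Lipschitz constraint that defines $\Ac$. First I would fix $a > 0$ and select a minimizing sequence $\{H_n\} \subset \Ac$ with $J(H_n) \to v(a)$, where $J(H)$ denotes the integral functional inside the infimum in \eqref{opi1}; since $v(a)$ is finite by \citelem{lem:v}, I may assume $J(H_n) \le v(a)+1$ for all $n$. Every $H_n$ is $1$-Lipschitz with $H_n(0)=0$, so $0 \le H_n(z) \le z$, and the family $\{H_n\}$ is pointwise bounded and equicontinuous on every compact subset of $[0,\infty)$.

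I would then invoke the Arzel\`a--Ascoli theorem together with a standard diagonal extraction over $[0,k]$ for $k = 1, 2, \ldots$ to extract a subsequence (still denoted $\{H_n\}$) converging locally uniformly to some continuous limit $H^*$. To verify $H^* \in \Ac$, I would observe that $H^*(0)=0$ and the $1$-Lipschitz bound pass to the limit, which makes $H^*$ absolutely continuous with $|(H^*)'|\le 1$ a.e., while monotonicity of each $H_n$ (implied by $H_n'\ge 0$) is preserved under pointwise convergence, yielding $(H^*)'\ge 0$ a.e.

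Finally I would pass to the limit in the three integrals comprising $J(H_n)$. For the linear pieces $\int_0^\infty H_n(z)\dd F(z)$ and $\int_0^\infty H_n(z)\dd\pw(z)$, the bound $H_n(z)\le z$ combined with integrability of $z$ under $F$ and $\pw$, guaranteed by \citeassmp{ass:p1} and \eqref{eq:new_asu}, allows the dominated convergence theorem. For the quadratic piece $\int_0^\infty H_n^2(z)\dd F(z)$, nonnegativity and pointwise convergence of $H_n^2$ to $(H^*)^2$ allow Fatou's lemma to give $\int_0^\infty (H^*(z))^2\dd F(z) \le \liminf_n \int_0^\infty H_n^2(z)\dd F(z)$. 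Combining these yields $J(H^*) \le \liminf_n J(H_n) = v(a)$, and since $H^* \in \Ac$ the reverse inequality is trivial, so $H^*$ is optimal.

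The main obstacle I anticipate is that $\BE{Z^2}$ is \emph{not} assumed finite, so the map $H \mapsto \int_0^\infty H^2\dd F$ is neither uniformly dominated on $\Ac$ nor continuous for the topology of locally uniform convergence. The resolution is that along any minimizing sequence, the a priori bound $J(H_n)\le v(a)+1$ together with the boundedness of the linear terms automatically forces $\int_0^\infty H_n^2(z)\dd F(z)$ to be uniformly bounded, after which Fatou's lemma supplies the required lower semicontinuity without any uniform dominant.
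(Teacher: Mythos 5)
Your proposal is correct and takes essentially the same route as the paper's proof: a minimizing sequence, Arzel\`a--Ascoli with diagonal extraction producing a limit $H^*\in\Ac$, and a passage to the limit in the functional. The only difference is cosmetic and lies in the limit passage---the paper truncates to $[0,K)$, controls the tail of $\int z\,\dd\pw$ and then applies monotone convergence, whereas you apply dominated convergence to the two linear terms (dominated by $z$, which is integrable under both $F$ and $\pw$ by Assumption \ref{ass:p1}) and Fatou's lemma for the lower semicontinuity of the nonnegative quadratic term; both arguments are valid.
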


Generally speaking, we do not have the uniqueness result regarding the solution to Problem \eqref{opi1}. 
This is because the objective in \eqref{opi1} depends on the supports of both $\dd F$ and $\dd\pw$. 
If the support of $\dd F$ is $[0, \infty)$, then the optimal solution to Problem \eqref{opi1} is unique, by virtue of the strict convexity of square functions. 
Since we do not impose any assumption on the loss variable, the support $\dd F$ may be a true subset of $[0, \infty)$. For instance, if $Z$ is a discrete random variable, the support of $\dd F$ is the collection of all discrete values $Z$ may take, and the uniqueness claim would fail in such a case.
However, even if Problem \eqref{opi1} admits more than one optimal solution, all of them will agree on the joint support of $\dd F$ and $\dd\pw$.

With Theorems \ref{thm:veri} and \ref{thm:existence}, the insurer's optimal reinsurance problem in \eqref{valueV} is fully analyzed. 
Observe that, by taking $a = a^*$ in the proof of Theorem \ref{thm:existence}, $\barh$ defined in \eqref{eq:H_bar} is an optimal retention $H^*$ for the insurer. 
In the rest of this section, we aim to obtain more useful characterizations of $H^*$; recall that, given $H^*$, $I^*(z) = z - H^*(z)$ is optimal to Problem \eqref{valueV}.

\begin{thm}[Optimality condition I]
\label{thm:op1}
An admissible retention function $H^* \in \Ac$ is optimal to Problem \eqref{eq:new_prob} if and only if it satisfies
\begin{align} \label{eq:op_I}
\quad \int_0^{\infty}\big(H(z)- H^*(z)\big)\big[(a^* H^*(z)+1)\dd F(z)-(1+\theta)\dd\pw(z)\big] \geq 0, \quad \forall \, H \in \Ac.
\end{align} 
\end{thm}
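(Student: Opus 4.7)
The plan is to recognize Problem \eqref{eq:new_prob} as the minimization of a convex functional over the convex set $\Ac$, so that \eqref{eq:op_I} drops out as its first-order variational characterization.

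First, I would record two structural facts: the admissible set $\Ac$ defined in \eqref{eq:Ac} is convex, since a convex combination of two absolutely continuous functions that vanish at $0$ and have derivative in $[0,1]$ a.e.\ inherits all three properties; and the objective
\[ J(H) \triangleq \int_0^\infty\Big[\tfrac{a^*}{2} H^2(z) + H(z)\Big]\dd F(z) - (1+\theta)\int_0^\infty H(z)\dd \pw(z) \]
is convex in $H$, because its only nonlinear piece $H \mapsto H^2$ is pointwise convex while the remaining terms are linear in $H$.

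Second, for any $H, H^* \in \Ac$ and $t \in (0,1]$, I would set $H^*_t \triangleq (1-t)H^* + tH \in \Ac$ and expand directly
\[ \frac{J(H^*_t) - J(H^*)}{t} = \int_0^\infty(H - H^*)(a^* H^* + 1)\dd F - (1+\theta)\int_0^\infty(H - H^*)\dd \pw + \frac{a^* t}{2}\int_0^\infty(H - H^*)^2\dd F. \]
Passing $t \to 0^+$ and invoking \dct\ (using $0 \le H, H^* \le z$, the integrability of $z$ under $\dd F$ and $\dd \pw$ from Assumption \ref{ass:p1}, and $\int_0^\infty (H^*)^2\dd F < \infty$, which follows from $v(a^*) > -\infty$) yields that the right derivative of $J$ at $H^*$ in the direction $H - H^*$ equals precisely the integral on the left side of \eqref{eq:op_I}.

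Third, the two directions of the theorem follow from the standard convex-analytic dictionary. For necessity, if $H^*$ is optimal then $J(H^*_t) \ge J(H^*)$ for every $t \in [0,1]$, so dividing by $t > 0$ and sending $t \to 0^+$ immediately gives \eqref{eq:op_I}. For sufficiency, the convexity of $J$ yields
\[ J(H) - J(H^*) \ge \lim_{t\to 0^+} \frac{J(H^*_t) - J(H^*)}{t}, \]
whose right-hand side is nonnegative by \eqref{eq:op_I}, so $J(H) \ge J(H^*)$ for every $H \in \Ac$. The only mildly technical step is the limit interchange in the G\^ateaux derivative computation; everything else is a direct application of convex minimization over a convex feasible set.
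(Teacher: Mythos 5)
Your overall strategy --- viewing Problem \eqref{eq:new_prob} as minimization of a convex functional $J$ over the convex set $\Ac$ and reading \eqref{eq:op_I} as the first-order condition at $H^*$ --- is exactly the paper's, and both directions are organized the same way. However, you gloss over two integrability points, and the paper's extra machinery exists precisely to handle them: the standing assumption \eqref{eq:asu} gives $\int_0^\infty z\,\dd F(z)<\infty$ but \emph{not} $\int_0^\infty z^2\,\dd F(z)<\infty$, so quadratic terms cannot be assumed finite.

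In the necessity direction, take a direction $H\in\Ac$ with $\int_0^\infty (H(z)-H^*(z))^2\,\dd F(z)=\infty$ (e.g.\ $H(z)=z$ with $H^*$ bounded and $Z$ lacking a second moment; the left side of \eqref{eq:op_I} is then still finite). Your remainder term $\frac{a^*t}{2}\int(H-H^*)^2\dd F$ equals $+\infty$ for every $t>0$, so the difference quotient is identically $+\infty$ and sending $t\to0^+$ yields only the vacuous statement $0\le+\infty$, not the linear inequality \eqref{eq:op_I} for that $H$. This is exactly why the paper perturbs toward the truncation $H\wedge n$, for which the quadratic remainder is finite and genuinely $O(\ep)$, obtains \eqref{eq:op_I} for $H\wedge n$, and only then lets $n\to\infty$ via \mct. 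In the sufficiency direction, your justification of $\int_0^\infty (H^*(z))^2\dd F(z)<\infty$ ``from $v(a^*)>-\infty$'' is circular: at that stage $H^*$ is merely a function satisfying \eqref{eq:op_I}, and the identity $J(H^*)=v(a^*)$ is the conclusion to be proved, not a hypothesis. The bound must be extracted from \eqref{eq:op_I} itself, as the paper does by substituting $H=H^*/2$ to obtain \eqref{L2a}; one should also dispose of competitors with $J(H)=+\infty$ as trivially dominated. With these two repairs your argument coincides with the paper's proof.
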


Theorem \ref{thm:op1} provides an optimality condition for Problem \eqref{eq:new_prob}, which is equivalent to the original insurer's problem in \eqref{valueV}. It is easily seen from the proof of Theorem \ref{thm:op1} (in Appendix \ref{sec:proof}) that replacing $a^*$ by $a$ in \eqref{eq:op_I} yields an optimality condition for the infimum problem in \eqref{opi1}.
Although the optimality condition in \eqref{eq:op_I} is both necessary and sufficient, applying it to find an optimal retention $H^*$ is still difficult, if not impossible, since one needs to test all admissible retention functions. 
To overcome this setback, we provide an alternative and, more useful, characterization below. 
In contrast to \eqref{eq:op_I} being satisfied by all $H \in \Ac$, the alternative condition in \eqref{vi001} is easier to check as it only involves $H^*$ itself.

\begin{thm}[Optimality condition II] 
\label{thm:op2}
An admissible retention function $H^* \in \Ac$ is optimal to Problem \eqref{eq:new_prob} if and only if it satisfies the following OIDE:
\begin{align} \label{vi001}
\min\Big\{\max\big\{ (H^*(z))' -1, \; \opl(z; H^*)\big\}, \;(H^*(z))' \Big\}=0, \quad\mbox{ a.e.}\;z>0, 
\end{align} 
in which $\Phi$ is defined by 
\begin{align} \label{opldef}
\opl(z; H)&\triangleq\int_z^{\infty}\big[(a^* H(x)+1)\dd F(x)-(1+\theta)\dd\pw(x)\big].
\end{align}
\end{thm}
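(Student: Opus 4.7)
The plan is to derive \eqref{vi001} from the optimality condition \eqref{eq:op_I} of Theorem \ref{thm:op1} via Fubini's theorem followed by a pointwise perturbation argument, and then to confirm the equivalence with \eqref{vi001} by a short case check.

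First, I would write $H(z) = \int_0^z h(x)\,\dx$ and $H^*(z) = \int_0^z h^*(x)\,\dx$ with $h, h^* \in [0,1]$ a.e., and introduce the finite signed measure $\mu(\dz) \triangleq (a^* H^*(z)+1)\dd F(z)-(1+\theta)\dd\pw(z)$, so that $\opl(x;H^*) = \mu([x,\infty))$ by \eqref{opldef}. Under Assumption \ref{ass:p1} both $F$ and $\pw$ have finite first moments, and $|H(z)-H^*(z)| \leq z$, so Fubini applies and yields
\begin{align*}
\int_0^\infty \bigl(H(z)-H^*(z)\bigr)\mu(\dz) &= \int_0^\infty \bigl(h(x)-h^*(x)\bigr)\mu\bigl((x,\infty)\bigr)\dx \\
&= \int_0^\infty \bigl(h(x)-h^*(x)\bigr)\opl(x;H^*)\,\dx,
\end{align*}
where the second equality uses that $\mu((x,\infty))$ and $\mu([x,\infty)) = \opl(x;H^*)$ differ only on the (countable, hence Lebesgue-null) set of atoms of $\mu$, inherited from those of $F$ and $\pw$. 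Thus Theorem \ref{thm:op1} recasts optimality of $H^*\in\Ac$ as the requirement $\int_0^\infty (h(x)-h^*(x))\opl(x;H^*)\dx \geq 0$ for every measurable $h:[0,\infty)\to[0,1]$.

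Next, by testing perturbations $h = h^* \pm \ep\mathds{1}_A$ on measurable sets $A$ of finite measure contained in $\{h^* \leq 1-\delta\}$ or $\{h^* \geq \delta\}$ and letting $\delta \downarrow 0$, a standard pointwise localization reduces the integral inequality to the equivalent a.e.\ conditions
\begin{align*}
h^*(x) < 1 \Longrightarrow \opl(x;H^*) \geq 0 \quad\text{and}\quad h^*(x) > 0 \Longrightarrow \opl(x;H^*) \leq 0.
\end{align*}
Finally, fixing $x$ and writing $h = h^*(x)$, $\varphi = \opl(x;H^*)$, a short case analysis according as $h = 0$, $h = 1$, or $h \in (0,1)$, using that $h - 1 \leq 0 \leq h$, shows these pointwise implications hold if and only if $\min\{\max\{h-1,\varphi\}, h\} = 0$, which is exactly \eqref{vi001}. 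For instance, at $h=0$ the equation reduces to $\min\{\max\{-1,\varphi\},0\}=0 \Leftrightarrow \varphi \geq 0$; at $h=1$ it becomes $\min\{\max\{0,\varphi\},1\}=0 \Leftrightarrow \varphi \leq 0$; and for $h\in(0,1)$ both implications force $\varphi = 0$, which manifestly makes the min-max vanish.

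The main obstacle is the Fubini step: carefully tracking the atoms of $F$ and $\pw$ so that $\mu((x,\infty))$ can be identified with $\mu([x,\infty)) = \opl(x;H^*)$ inside a Lebesgue integral in $x$, together with the integrability check that justifies the interchange. Once this is in place, the perturbation argument and the case analysis are entirely routine.
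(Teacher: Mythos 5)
Your proposal is correct and follows essentially the same route as the paper: starting from the optimality condition of Theorem \ref{thm:op1}, applying Fubini's theorem (using $\Phi(\infty)=0$ and $H(0)=H^*(0)=0$) to recast the inequality as $\int_0^\infty \Phi(z;H^*)\bigl(H'(z)-(H^*)'(z)\bigr)\dz \ge 0$, deducing the pointwise bang-bang characterization of $(H^*)'$ in terms of the sign of $\Phi$, and then rewriting it as the min-max equation \eqref{vi001}. The only cosmetic differences are that you carry out the final case analysis by hand where the paper invokes Lemma \ref{lem:in_eq}, and that you are slightly more explicit about the atoms of $\dd F$ and $\dd\pw$ in the Fubini step.
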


We next apply Theorem \ref{thm:op2} to derive the condition under which stop-loss reinsurance contracts are optimal.

\begin{coro}\label{coro1}
A stop-loss reinsurance contract with indemnity $I^* (z) = (z - d)^+$, in which $d > 0$ is called the deductible, is optimal to the insurer's problem in \eqref{valueV} if and only if 
\begin{align} \label{deductible}
\qquad
\begin{cases}
-a^*\int_z^{d} (d-x)\dd F(x)+(a^* d+1)(1-F(z))\leq (1+\theta_0)\wg(1-F(z)), & \quad z\in[0, d), \medskip\\
(a^* d+1)(1-F(z))\geq (1+\theta_0)\wg(1-F(z)), & \quad z \in [ d, \infty), 
\end{cases} 
\end{align} 
in which $a^*$ is given explicitly by 
\begin{align} 
\label{eq:a_star}
a^* =\frac{\pi+\int_0^{\infty} (z-d)^+\dd\;( F(z)+(1+\theta_0)\wg(1-F(z)))-\int_0^{\infty}z\dd F(z)}{\frac{1}{2}\int_0^{\infty}\min\{d, z\}^2\dd F(z)}>0.
\end{align} 
\end{coro}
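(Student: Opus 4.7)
The plan is to apply Theorem \ref{thm:op2}: setting $H^*(z) = z - I^*(z) = \min\{z,d\}$, I would verify that $H^* \in \Ac$ (trivial, with derivative $(H^*)'(z) = \mathbf{1}_{\{z<d\}}$ a.e.) and then unpack the OIDE \eqref{vi001} piecewise on $(0,d)$ and $(d,\infty)$. On $(0,d)$, where $(H^*)'=1$, the OIDE collapses to $\min\{\max\{0,\opl(z;H^*)\},1\}=0$, which is equivalent to $\opl(z;H^*)\le 0$. On $(d,\infty)$, where $(H^*)'=0$, the OIDE becomes $\min\{\max\{-1,\opl(z;H^*)\},0\}=0$, equivalent to $\opl(z;H^*)\ge 0$. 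Thus the stop-loss with deductible $d$ is optimal if and only if these two sign conditions on $\opl$ hold.

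Next, I would translate these sign conditions into \eqref{deductible}. Using \eqref{eq:F_hat} and \eqref{eq:theta}, one has $(1+\theta)\dd\pw(x) = -(1+\theta_0)\dd\wg(1-F(x))$, so $\int_z^\infty (1+\theta)\dd\pw(x) = (1+\theta_0)\wg(1-F(z))$. For $z \ge d$, the retention $H^*(x) = d$ on $[z,\infty)$ gives directly
\[
\opl(z;H^*) = (a^*d+1)(1-F(z)) - (1+\theta_0)\wg(1-F(z)),
\]
matching the second line of \eqref{deductible}. For $z<d$, I would split $\int_z^\infty(a^*H^*(x)+1)\dd F(x)$ at $d$ and use the identity
\[
\int_z^d(a^*x+1)\dd F(x) = (a^*d+1)(F(d)-F(z)) - a^*\int_z^d(d-x)\dd F(x),
\]
yielding $\opl(z;H^*) = (a^*d+1)(1-F(z)) - a^*\int_z^d(d-x)\dd F(x) - (1+\theta_0)\wg(1-F(z))$, which produces the first line of \eqref{deductible}.

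For the explicit formula \eqref{eq:a_star}, I would use Lemma \ref{lem:equi} and \ref{lem:a_star}: $a^*$ is the unique positive root of $v(a^*) = \pi - (1+\theta)\int_0^\infty z\dd\pw(z)$. Substituting $H^*(z)=\min\{z,d\}$ into \eqref{opi1} and writing $\min\{z,d\} = z - (z-d)^+$, the linear terms combine as
\[
\int_0^\infty\! \min\{z,d\}\dd F(z) - (1+\theta)\!\int_0^\infty\!\min\{z,d\}\dd\pw(z) + (1+\theta)\!\int_0^\infty\! z\dd\pw(z) - \int_0^\infty\! z\dd F(z) + \int_0^\infty\!(z-d)^+\dd F(z) - (1+\theta)\!\int_0^\infty\!(z-d)^+\dd\pw(z).
\]
Reapplying $-(1+\theta)\dd\pw = (1+\theta_0)\dd\wg(1-F)$ merges the $F$ and $\pw$ terms for $(z-d)^+$ into a single Stieltjes integral against $F(z)+(1+\theta_0)\wg(1-F(z))$. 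Solving the resulting equation for $a^*$ (the coefficient of the quadratic term $\frac{1}{2}\int_0^\infty\min\{z,d\}^2\dd F(z)$) gives \eqref{eq:a_star}; positivity of $a^*$ is then automatic from Lemma \ref{lem:a_star}.

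The main obstacle I anticipate is the bookkeeping in the second paragraph: isolating $\opl(z;H^*)$ in the exact algebraic form appearing in \eqref{deductible} requires the integration-by-parts-style identity above and a careful treatment of the convention $\int_a^b = \int_{[a,b)}$, particularly at $z=d$ where the retention's derivative has a jump. Once that reformulation is in hand, everything else is a matter of matching the resulting expressions to the stated inequalities and to the right-hand side of \eqref{eq:v_star}.
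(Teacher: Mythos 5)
Your proposal is correct and follows essentially the same route as the paper: reduce via Theorem \ref{thm:op2} to the sign conditions $\opl(\cdot;H^*)\le 0$ on $[0,d)$ and $\opl(\cdot;H^*)\ge 0$ on $[d,\infty)$ for $H^*(z)=\min\{z,d\}$, compute $\opl$ explicitly in each region to obtain \eqref{deductible}, and recover \eqref{eq:a_star} by evaluating the objective of \eqref{opi1} at $H^*$ and equating it with the right-hand side of \eqref{eq:v_star}. The algebraic identities you flag (the splitting of $\int_z^\infty(a^*H^*+1)\dd F$ at $d$ and the substitution $(1+\theta)\dd\pw=-(1+\theta_0)\dd\wg(1-F)$) are exactly the computations carried out in the paper's proof.
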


\section{Examples}
\label{sec:exm}

In this section, we consider three examples which lead to more explicit results on the optimal reinsurance contract.

\subsection{The case of no distortion}

In the first example, we assume that the reinsurer does not apply distortion in determining the reinsurance premium. Equivalently, we set $\wg(p) \equiv p$ for all $p \in [0,1]$.
Under such an assumption, the general distortion premium principle in \eqref{eq:premium} reduces to the expected value principle, i.e., $\BE[\wg]{I(Z)} = (1 + \theta_0) \BE{I(Z)}$, with $\theta_0 > 0$. 
We apply Corollary \ref{coro1} to find the optimal reinsurance contract as follows.

\begin{coro}
\label{coro2}
Suppose $\wg(p) \equiv p$ for all $p \in [0,1]$. Then, the optimal reinsurance contract to the insurer's problem in \eqref{valueV} is a stop-loss contract.
\end{coro}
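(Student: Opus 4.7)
The plan is to apply Corollary \ref{coro1} directly, by exhibiting an explicit deductible and checking the two conditions in \eqref{deductible}. A preliminary observation is that under $\wg(p) \equiv p$, Assumption \ref{ass:p1} becomes $\BE{Z} < \pi < (1+\theta_0)\BE{Z}$, which forces the loading $\theta_0 > 0$. Second, \eqref{eq:F_hat} simplifies to $\pw(z) = (F(z)-F(0))/(1-F(0))$, so $\dd\pw(z) = \dd F(z)/(1-F(0))$ on $(0,\infty)$; combined with \eqref{eq:theta}, this yields the clean identity $(1+\theta)\dd\pw(z) = (1+\theta_0)\dd F(z)$. Consequently, the right-hand sides of both inequalities in \eqref{deductible} reduce to $(1+\theta_0)(1-F(z))$.

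The natural candidate is the deductible $d = \theta_0/a^*$, where $a^*$ is the unique positive number supplied by Lemma \ref{lem:a_star}; since $\theta_0 > 0$, $d > 0$. With this choice we have $a^* d = \theta_0$, so the inequality on $[d,\infty)$ becomes $(\theta_0+1)(1-F(z)) \geq (1+\theta_0)(1-F(z))$, holding with equality. The inequality on $[0,d)$ reduces to $-a^*\int_z^d (d-x)\,\dd F(x) \leq 0$, which is trivially satisfied. The consistency of the formula \eqref{eq:a_star} is automatic once this $d$ is plugged in: indeed \eqref{eq:a_star} is simply the specialization of \eqref{eq:v_star} to the retention $H^*(z) = \min(z,d)$, and Lemma \ref{lem:a_star} guarantees that the unique $a^*$ satisfying \eqref{eq:v_star} is the same one here.

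No real obstacle arises, and the reason is transparent: under the identity $(1+\theta)\dd\pw = (1+\theta_0)\dd F$, Problem \eqref{eq:new_prob} collapses to the pointwise convex minimization $\inf_{H \in \Ac}\int_0^\infty \bigl[\frac{a^*}{2}H^2(z) - \theta_0 H(z)\bigr]\,\dd F(z)$, whose unconstrained pointwise minimizer is the constant $\theta_0/a^*$. The IC constraint $H \in \Ac$ then forces us to interpolate from $H(0)=0$ with slope at most $1$, producing exactly the stop-loss retention $H^*(z) = \min(z,\theta_0/a^*)$ and hence the stop-loss indemnity $I^*(z) = (z - \theta_0/a^*)^+$. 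If a more direct argument is desired, one could instead invoke Theorem \ref{thm:op2}, noting that on $\{z < d\}$ the candidate satisfies $(H^*)' = 1$ together with $\opl(z;H^*) \le 0$, while on $\{z > d\}$ it satisfies $(H^*)' = 0$ together with $\opl(z;H^*) = (a^* d - \theta_0)(1-F(z)) = 0$, verifying the double-obstacle equation \eqref{vi001}.
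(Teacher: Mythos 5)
Your proposal is correct and follows essentially the same route as the paper: both reduce to verifying the two conditions of Corollary \ref{coro1} under the key relation $a^* d = \theta_0$, whereupon the second condition holds with equality and the first collapses to $-a^*\int_z^d(d-x)\,\dd F(x)\le 0$. The only difference is the order of construction --- the paper first pins down $d^*$ from the explicit equation \eqref{eq:d_star} via an intermediate-value argument and then recovers $a^*=\theta_0/d^*$ from \eqref{eq:a_star}, whereas you start from the $a^*$ of Lemma \ref{lem:a_star} and set $d=\theta_0/a^*$ (your consistency argument for \eqref{eq:a_star} and the closing pointwise-minimization remark are both valid).
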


\begin{proof}
Define a function $\psi: \R_+ \times \R_+ \mapsto \R_+ $ by $\psi(d, z) \triangleq \big(2-\min\{1, d^{-1}z\} \big)\min\{d, z\}$. 
For any given $z >0$, $\psi(d, z)$ is nonnegative and increasing as a function of $d$.
By \mct, we obtain 
\begin{align*}
\lim_{d\to 0}\frac{1}{2}\theta_0 \, \int_0^\infty \psi(d, z) \dd F(z) =0 \quad \text{and} \quad \lim_{d\to \infty}\frac{1}{2}\theta_0 \, \int_0^\infty \psi(d, z) \dd F(z) = \theta_0 \BE{Z}.
\end{align*}
Note from \eqref{eq:asu} that $0 < (1+\theta_0)\BE{Z}-\pi < \theta_0 \BE{Z}$. As such, there exists a $d^* > 0$ such that
\begin{align}
\label{eq:d_star}
\frac{1}{2}\theta_0 \, \int_0^\infty \psi(d^*, z) \dd F(z) = (1+\theta_0)\BE{Z}-\pi.
\end{align}

Assume for a moment that a stop-loss contract with indemnity $I^*(z) = (z - d^*)^+$ is optimal to Problem \eqref{valueV}. Then, by recalling $\wg(p) = p$ and \eqref{eq:d_star}, we use \eqref{eq:a_star} to compute $a^*$ by
\begin{align*} 
a^* 
&=\frac{\pi+\theta_0\int_0^{\infty} \min\{d^*, z\}\dd\; F(z)-(1+\theta_0)\int_0^{\infty}z\dd F(z)}{\frac{1}{2}\int_0^{\infty}\min\{d^*, z\}^2\dd F(z)} = \frac{\theta_0}{d^*}.
\end{align*}
Under the above $a^*$ and by replacing $d$ by $d^*$, we verify that the two conditions in \eqref{deductible} hold. As such, by Corollary \ref{coro1}, $I^*(z) = (z - d^*)^+$ is the insurer's optimal reinsurance indemnity. 
\end{proof}

Corollary \ref{coro2} provides a full characterization for the optimal stop-loss reinsurance contract. 
Recall that the reinsurer adopts the expected value principle with loading $\theta_0$ in this example; to further compute $d^*$, let us assume that the insurer also charges premium by the expected value principle, but with loading $\tilde{\theta}$ ($0 < \tilde{\theta} < \theta_0$ by \eqref{eq:asu}). Upon scaling, we set $\BE{Z} = 1$. As an immediate result, $(1+\theta_0)\BE{Z}-\pi = \theta_0 - \tilde{\theta}$. 
By \eqref{eq:d_star}, $d^*$ solves the following equation of $d$: 
\begin{align}
\int_0^d (2 - z/d) z \, \dd F(z) + d (1 - F(d)) = 2 (1 - \kappa), \quad \kappa \triangleq \tilde{\theta} / \theta_0 \in (0,1). 
\end{align}
Once the loss distribution $F$ is known or estimated, a nonlinear solver can easily find $d^*$ for any $\kappa \in (0,1)$, since the left hand side is an increasing function of $d$.

\subsection{An example of optimal multi-layer contract}

In the second example, we consider a special piece-wise loss distribution and impose certain conditions on the reinsurer's distortion $\wg$, under which the optimal reinsurance contract is of multi-layer type, as shown in Proposition \ref{prop:layer}.
The detailed setup of this example is introduced as follows. 
\begin{itemize}
\item The loss distribution $F$ is a mixture of exponential distributions and jumps, with 
$F(z) = 1 - \ee^{-z/6}$, if $z\in [0, 1)$; $F(z) = 1 - \ee^{-z/5}$, if $z\in [1, 6)$; and $F(z) = 1 - \ee^{-z/3}$, if $z \in [6, \infty)$.

\item The distortion function $\wg \in \setw$ satisfies the following extra conditions: 
$\wg(\ee^{-z/3})\leq \ee^{-z/3}$, if $z\in [6, \infty)$;
$\wg(\ee^{-z/5})= \ee^{-z/5}$, if $z\in [4, 6)$;
$\wg(\ee^{-z/5})= \frac{1}{8} \ee^{-z/5}(z+9)-\frac{5}{8} \ee^{-4/5}$, if $z\in [2, 4)$; 
$\wg(\ee^{-z/5})\geq \frac{7}{4} \ee^{-1/5}-\frac{5}{8} \ee^{-2/5}-\frac{5}{8} \ee^{-4/5}$, if $z\in[1, 2)$; 
$\wg(\ee^{-z/6})\geq \frac{7}{4}-\frac{3}{2} \ee^{-1/6}+\frac{5}{4} \ee^{-1/5}-\frac{5}{8} \ee^{-2/5}-\frac{5}{8} \ee^{-4/5}$, if $ z\in [0, 1)$.
We take the special values (or bounds) as above so that $\wg$ can be selected to meet the conditions of $\setw$.

\item Set $\theta_0 = 3$. By \eqref{eq:theta}, we compute $\theta=(1+\theta_0)\wg(1-F(0))-1=4\wg(1)-1=3$.

\item The insurer's premium rate $\pi$ is given by 
\begin{align}
\pi &=\int_0^{1}(z+1) \ee^{-z/6}\dz+\int_1^{2}(z+1) \ee^{-z/5}\dz +\frac{1}{4}\int_2^{4}(z+4) \ee^{-z/5}\dz\nonumber\\ 
&\quad\;+2\int_2^{4}\wg(\ee^{-z/5})\dz +4\int_4^{6} \ee^{-z/5} \dz+4\int_6^{\infty} \wg(\ee^{-z/3})\dz. 
\label{eq:pi_layer}
\end{align}
\end{itemize}

\begin{prop}
\label{prop:layer}
Under the above setting, the optimal indemnity function $I^*$ is given by 
\begin{align}
\label{eq:I_layer}
I^*(z)=\begin{cases}
z-3, &\quad z\in [4, \infty);\\
\frac{1}{2}z-1, &\quad z\in [2, 4);\\
0, &\quad z\in [0, 2).
\end{cases}
\end{align} 
Furthermore, the value function $V(x)$ defined by \eqref{valueV} is equal to $\ee^{-x}$.
\end{prop}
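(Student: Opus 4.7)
The plan is to verify that the candidate $H^*(z)=z-I^*(z)$ (equal to $z$ on $[0,2)$, $z/2+1$ on $[2,4)$, and $3$ on $[4,\infty)$) satisfies the OIDE \eqref{vi001} with $a^*=1$, and that $a^*=1$ solves the self-consistency equation \eqref{eq:v_star}, whence Theorem~\ref{thm:veri} delivers $V(x)=\ee^{-x}$.

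Admissibility is immediate: $H^*$ is absolutely continuous, $H^*(0)=0$, and $(H^*)'\in\{1,\tfrac12,0\}\subset[0,1]$ a.e., so $H^*\in\Ac$. With $\theta_0=3$ and $F(0)=0$, \eqref{eq:theta} gives $\theta=3$, so the OIDE at $a^*=1$ becomes
\begin{align*}
\min\!\Big\{\max\big\{(H^*)'(z)-1,\;\opl(z;H^*)\big\},\;(H^*)'(z)\Big\}=0\quad\text{a.e. } z>0,
\end{align*}
with $\opl(z;H^*)=\int_z^\infty\big[(H^*(x)+1)\dd F(x)-4\dd\pw(x)\big]$. Because $(H^*)'$ takes three distinct values on the three subintervals, this reduces to the pointwise conditions
\begin{align*}
\opl(\cdot;H^*)\le 0 \text{ on }[0,2),\qquad \opl(\cdot;H^*)=0 \text{ on }[2,4),\qquad \opl(\cdot;H^*)\ge 0 \text{ on }[4,\infty).
\end{align*}

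The key computation sits on $[2,4)$: there $F'(x)=\ee^{-x/5}/5$ and the prescribed formula $\wg(\ee^{-z/5})=(z+9)\ee^{-z/5}/8-(5/8)\ee^{-4/5}$ yields, after differentiation, $\pw'(x)=(x+4)\ee^{-x/5}/40$, so $(x/2+2)F'(x)=4\pw'(x)$ identically. Hence $z\mapsto\opl(z;H^*)$ is constant on $[2,4)$, and I would evaluate at $z=4$ using $H^*\equiv 3$ on $[4,\infty)$, the continuity of $F$ and $\pw$ at $z=4$, the prescribed identity $\wg(\ee^{-z/5})=\ee^{-z/5}$ on $[4,6)$ (which makes the cumulative masses of $F$ and $\pw$ on $[4,6)$ exactly cancel in $\opl$), and the bound $\wg(\ee^{-z/3})\le\ee^{-z/3}$ on $[6,\infty)$ for the contribution past the jump at $z=6$, to pin the constant down to $\opl(4;H^*)=0$. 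The inequalities on $[0,2)$ and $[4,\infty)$ follow from integrating the explicit exponential densities together with the prescribed lower bounds on $\wg$ on $[0,2)$ and the upper bound on $[6,\infty)$, while carefully tracking the two positive jumps of $F$ at $z=1$ and $z=6$. Theorem~\ref{thm:op2} then certifies $H^*$ as an optimizer of Problem \eqref{eq:new_prob} at $a^*=1$.

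Finally, to conclude $V(x)=\ee^{-x}$ via Theorem~\ref{thm:veri}, $a^*=1$ must satisfy \eqref{eq:v_star}. Substituting the optimal $H^*$ into \eqref{opi1} reduces this to the identity
\begin{align*}
\int_0^\infty\!\Big[\tfrac12 H^*(z)^2+H^*(z)\Big]\dd F(z)+4\int_0^\infty I^*(z)\,\dd\pw(z)=\pi,
\end{align*}
which is exactly the rearrangement of \eqref{eq:pi_layer} once the piecewise forms of $F$ and $\wg$ are substituted; $\pi$ has been defined precisely to make the self-consistency hold at $a^*=1$. The principal obstacle will be the bookkeeping: splitting every integral cleanly across the five subintervals $[0,1),[1,2),[2,4),[4,6),[6,\infty)$, tracking the two positive jumps of $F$ at $z=1$ and $z=6$, and exploiting the inequality freedom in $\wg$ on $[0,2)$ and $[6,\infty)$ to simultaneously satisfy the boundary inequalities of the OIDE and the exact premium identity \eqref{eq:pi_layer}.
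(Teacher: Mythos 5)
Your proposal is correct and follows essentially the same route as the paper: verify the sign conditions on $\opl(\cdot;H^*)$ region by region via Theorem \ref{thm:op2}, then confirm $a^*=1$ from \eqref{eq:v_star}, with your identity $\int_0^\infty[\tfrac12 H^{*2}+H^*]\dd F+4\int_0^\infty I^*\dd\pw=\pi$ being exactly what \eqref{eq:pi_layer} encodes (the first three terms are $\int(1-F)(H^*+1)(H^*)'\dz$ and the last three are $4\int\wg(1-F)(I^*)'\dz$). The only step you elide on $[0,2)$ is that the prescribed lower bounds on $\wg$ are constants while $\opl$ contains the $z$-dependent terms $\tfrac14\ee^{-z/5}(z+6)$ and $\tfrac14\ee^{-z/6}(z+7)$, so you still need the one-line observation (used in the paper) that these are decreasing and hence maximized at the left endpoints $z=1$ and $z=0$.
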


By simple algebra, we can rewrite the optimal indemnity $I^*$ in \eqref{eq:I_layer} as
\[I^*(z)=\frac{1}{2}(z-2)^++\frac{1}{2}(z-4)^+, \quad z \ge 0.\]
As such, the optimal reinsurance can be regarded as a mixture of two stop-loss contracts with equal weights.

Proposition \ref{prop:layer} shows that the optimal reinsurance contract has three distinctive layers: the top layer is of stop-loss type, the middle layer is a mixture of proportional and stop-loss types, and the bottom layer is no reinsurance. 
In the literature, an optimal multi-layer reinsurance contract may arise from different setups, such as under the regret-theoretical EUT in \citet{chi2022regret}.

\begin{proof}[Proof of Proposition \ref{prop:layer}] 
We first note that the insurer's premium $\pi$ is chosen as in \eqref{eq:pi_layer} so that $a^* = 1$; recall that $a^*$ is the unique solution to \eqref{eq:v_star} established in Lemma \ref{lem:a_star}. (We will formally verify \eqref{eq:v_star} holds at $a^* = 1$ later.) 
Then, it suffices to show that $H^*(z)= z - I^*(z)$, with $I^*$ in \eqref{eq:I_layer}, 
is an optimal solution to \eqref{eq:new_prob} with $a^* = 1$. By Theorem \ref{thm:op2} and Lemma \ref{lem:in_eq}, it is equivalent to show that the following holds for the above $H^*$: 
\begin{align}
\label{eq:H_layer}
\opl(z; H^*)\geq 0, \; z\in [4, \infty); 
\quad 
\opl(z; H^*)= 0, \; z\in [2, 4);
\quad
\opl(z; H^*)\leq 0, \; z\in [0, 2).
\end{align}
Recall $\Phi(z; H^*)$ is defined in \eqref{opldef} and can be reduced into 
\begin{align*}
\opl(z; H^*) = a^* \int_z^{\infty} (1-F(x)) (H^*)'(x)\dx+(1-F(z))(a^* H^*(z)+1)-(1+\theta_0)\wg(1-F(z)).
\end{align*}

To prove the above claim, we consider four separate regions of $z$ and compute $\Phi(z; H^*)$ using the above result. 
\begin{enumerate}[(1)]
\item For all $z\geq 4$, 
\begin{align*} 
\opl(z; H^*) 
&=4(1-F(z))-4\wg(1-F(z))
=\begin{cases}
4 \, \ee^{-z/3}-4\wg(\ee^{-z/3})\geq 0, &\quad z\in[6, \infty);\\
4 \, \ee^{-z/5}-4\wg(\ee^{-z/5})= 0, &\quad z\in[4, 6).
\end{cases}
\end{align*} 
\item For all $z\in [2, 4)$, 
\begin{align*} 
\opl(z; H^*) 
&= \frac{1}{2}\int_z^{4} \ee^{-x/5}\dx+\frac{1}{2} \, \ee^{-z/5}(z+4)-4\wg(\ee^{-z/5})\\
&= \frac{5}{2}(\ee^{-z/5}- \ee^{-4/5})+\frac{1}{2} \ee^{-z/5}(z+4)-4\wg(\ee^{-z/5})= 0.
\end{align*} 
\item For all $z\in [1, 2)$, 
\begin{align*} 
\opl(z; H^*) 
&= \int_z^{2} \ee^{-x/5}\dx+\int_2^{4} \frac{1}{2} \, \ee^{-x/5}\dx+ \ee^{-z/5}(z+1)-4\wg( \ee^{-z/5})\\
&= 5(\ee^{-z/5}- \ee^{-2/5})+\frac{5}{2}(\ee^{-2/5}- \ee^{-4/5})+ \ee^{-z/5}(z+1)-4\wg(\ee^{-z/5})\\
&=4\Big[\frac{1}{4} \ee^{-z/5}(z+6)-\frac{5}{8} \ee^{-2/5}-\frac{5}{8} \ee^{-4/5}-\wg(\ee^{-z/5})\Big]
\le 0, 
\end{align*} 
in which we have used the fact that $\ee^{-z/5}(z+6)$ is decreasing over $[1,2)$ to derive the inequality.
\item For all $z\in [0, 1 )$, 
\begin{align*} 
\opl(z; H^*) 
&= \int_z^{1} \ee^{-x/6}\dx+\int_1^{2} \ee^{-x/5}\dx+\int_2^{4} \frac{1}{2} \ee^{-x/5}\dx+ \ee^{-z/6}(z+1)-4\wg(\ee^{-z/6})\\
&=4\Big[\frac{1}{4} \ee^{-z/6}(z+7)-\frac{3}{2} \ee^{-1/6}+\frac{5}{4} \ee^{-1/5}-\frac{5}{8} \ee^{-2/5}-\frac{5}{8} \ee^{-4/5}- \wg(\ee^{-z/6})\Big] \le 0. 
\end{align*} 

\end{enumerate}

Since all conditions in \eqref{eq:H_layer} hold as we wish, the proof will be complete once we verify that $a^* = 1$. To that end, we use the definition of $v$ in \eqref{opi1} to compute $v(1)$ and next calculate $(1+\theta)\int_0^{\infty}z\dd\pw(z)$, which, together with \eqref{eq:pi_layer}, give $v(1) = \pi - (1+\theta)\int_0^{\infty}z\dd\pw(z)$. 
By \eqref{eq:v_star} and Lemma \ref{lem:a_star}, $a^*$ must be 1, as claimed.
The proof is complete.
\end{proof}

\subsection{An example with differentiable quantile of the loss distribution} 
\label{sub:quantile}

For the loss distribution $F$, define its quantile function $F^{-1}$ by 
\begin{align*}
	F^{-1}(p) \triangleq \inf \left\{ z \ge 0 | F(z) \ge p \right\}, \quad p \in (0,1],
\end{align*}
with the convention that $\inf \emptyset = + \infty$, 
and set $F^{-1}(0) = \lim_{p \to 0+} \, F^{-1}(p)$. 
In this example, we impose the following assumptions on $F^{-1}$. 
\begin{assmp}\label{ass1} 
The quantile function $F^{-1}$ is absolutely continuous over $[0,1]$, with $F^{-1}(0)=0$, and its derivative satisfies $h(p) \triangleq (F^{-1})'(p)> 0$ for a.e. $p\in(\ms, 1)$, in which $\ms \triangleq F(0)<1$. 
\end{assmp} 
If $\ms>0$, then $Z$ has a positive mass at 0, so \citeassmp{ass1} covers the most common and important case with losses having a positive mass at 0 in insurance. 
Under \citeassmp{ass1}, the loss distribution $F$ is strictly increasing when its value is strictly less than $1$. Also, $F^{-1}(F(x))=x$ for all $0\leq x\leq \esup Z$, $F^{-1}(p)=0$ for $p\leq \ms$, and $F^{-1}(p)>0$ for $p>\ms$. These facts will be used frequently in the subsequent analysis without claim. 

To solve Problem \eqref{valueV}, we introduce the following ODE for $\opll$: 
\begin{align} \label{vi003}
	\begin{cases}
		\min\Big\{\max\big\{\opll''(p)- a^* h(p), \; 1-p-(1+\theta_0)\wg(1-p)-\opll(p)\big\}, \;\opll''(p)\Big\}=0, \quad\\
		\hfill\mbox{ a.e.\; $p\in (\ms, 1)$}, \\
		\opll(1)=0, \quad \opll'(\ms)=0.
	\end{cases}
\end{align} 
Also denote $C^{2-}[m_0, 1]$ as the set of all differentiable functions $f : [m_0, 1] \to \R$, whose derivatives are absolutely continuous.

We now present the key mathematical results of this section in Proposition \ref{prop:equivalence} and discuss its significance in Remark \ref{rem:discussion}. 
\begin{prop}
	\label{prop:equivalence}
	Suppose \citeassmp{ass1} holds. We have the following assertions: 
	\begin{enumerate}[(1).]
		\item If $I^* \in \Ac$ is an optimal indemnity to Problem \eqref{valueV}, 
		then 
		\begin{align}
			\label{eq:opll}
			\opll(p)=-a^* \int_p^1 \left(F^{-1}(x)- I^* \big( F^{-1}(x) \big) \right)\dx,\quad p\in[\ms, 1],
		\end{align}
		is a solution to the ODE in \eqref{vi003} in the class of $C^{2-}([\ms, 1])$ functions.
		
		\item
		If $\opll$ is a solution to the ODE in \eqref{vi003} in the class of $C^{2-}([\ms, 1])$ functions, then
		\begin{align}
			\label{eq:I_op_new} 
			I^*(z)= z-\frac{1}{a^*} \, \opll'(F(z)), \quad z\geq 0,
		\end{align}
		is an optimal indemnity to Problem \eqref{valueV}.
		
		\item The ODE \eqref{vi003} admits a unique solution in $ C^{2-}([\ms, 1])$. 
	\end{enumerate}
\end{prop}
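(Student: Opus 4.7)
The three assertions will all follow from a single change of variables $p=F(z)$, which under \citeassmp{ass1} maps $[0,\esup Z]$ bijectively onto $[\ms,1]$ with inverse $F^{-1}$ absolutely continuous. Writing $H^*(z)=z-I^*(z)$, the plan rests on two key identities:
\begin{align*}
\opll'(p)=a^*H^*(F^{-1}(p))\quad\text{and}\quad \opll''(p)=a^*(H^*)'(F^{-1}(p))\,h(p)\text{ a.e.},
\end{align*}
together with (via the substitution $x=F^{-1}(y)$ in \eqref{opldef} and the relation $(1+\theta)(1-\pw(z))=(1+\theta_0)\wg(1-F(z))$)
\begin{align*}
\opl(F^{-1}(p);H^*)=1-p-(1+\theta_0)\wg(1-p)-\opll(p).
\end{align*}
Under this dictionary, the three exclusive alternatives encoded in \eqref{vi001}, namely $(H^*)'=0$ with $\opl\geq 0$, $(H^*)'=1$ with $\opl\leq 0$, and $0<(H^*)'<1$ with $\opl=0$, correspond one-to-one to the three alternatives encoded in \eqref{vi003}.

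\textbf{Proofs of (1) and (2).} For (1), starting from an optimal $I^*\in\Ac$, the regularity $\opll\in C^{2-}([\ms,1])$ is immediate: $H^*$ is Lipschitz with constant $1$ and $F^{-1}$ is absolutely continuous, so $\opll'(p)=a^*H^*(F^{-1}(p))$ is absolutely continuous. The boundary conditions read off directly from \eqref{eq:opll}, namely $\opll(1)=0$ and $\opll'(\ms)=a^*H^*(0)=0$. Applying Theorem \ref{thm:op2} to $H^*$ and invoking the dictionary above then yields \eqref{vi003}. For (2), given such an $\opll$, I would set $H^*(z)=\opll'(F(z))/a^*$. Admissibility $H^*\in\Ac$ follows from $H^*(0)=\opll'(\ms)/a^*=0$, monotonicity via $\opll''\geq 0$ a.e., and the Lipschitz estimate
\begin{align*}
|H^*(z_2)-H^*(z_1)|\leq\tfrac{1}{a^*}\int_{F(z_1)}^{F(z_2)}\opll''(p)\dd p\leq\int_{F(z_1)}^{F(z_2)}h(p)\dd p=z_2-z_1,
\end{align*}
where the final equality uses $\int h\dd p=F^{-1}\circ F=\mathrm{id}$. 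Reversing the dictionary converts \eqref{vi003} into \eqref{vi001}, and Theorem \ref{thm:op2} then delivers optimality of $I^*(z)=z-H^*(z)$.

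\textbf{Proof of (3) and main obstacle.} For uniqueness, suppose $\opll_1,\opll_2\in C^{2-}([\ms,1])$ are two solutions. By part (2), the retentions $H_i^*(z)=\opll_i'(F(z))/a^*$ are both optimal for \eqref{eq:new_prob}. A convex-combination argument, applied to the strictly convex quadratic term $\int \tfrac{a^*}{2}H^2\dd F$ in the objective, forces $H_1^*=H_2^*$ on the support of $\dd F$; under \citeassmp{ass1} this support is $[0,\esup Z]$, and by continuity $H_1^*\equiv H_2^*$ there, whence $\opll_1'\equiv\opll_2'$ on $[\ms,1)$. Combined with the common terminal value $\opll_i(1)=0$, this yields $\opll_1\equiv\opll_2$. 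The main difficulty I anticipate is not in the translations themselves but in the bookkeeping around the two almost-everywhere sets (one on $z$, one on $p$): the dictionary must translate $z$-null exceptional sets of \eqref{vi001} to $p$-null sets of \eqref{vi003} and conversely, which is clean under \citeassmp{ass1} since $F^{-1}$ is absolutely continuous with a.e.\ positive derivative, but requires careful handling of the level set $\{h=0\}$ when matching the case boundaries.
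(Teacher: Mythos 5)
Your proposal is correct and follows essentially the same route as the paper: the same change of variable $p=F(z)$, the same dictionary $\opll'(F(z))=a^*H^*(z)$ and $\opl(z;H^*)=1-F(z)-(1+\theta_0)\wg(1-F(z))-\opll(F(z))$, Theorem \ref{thm:op2} as the bridge between \eqref{vi001} and \eqref{vi003}, and uniqueness via strict convexity of the quadratic term together with the boundary conditions. Your explicit verification that $H^*\in\Ac$ in part (2), using $0\le\opll''\le a^*h$ a.e., is a detail the paper leaves implicit but is a correct and welcome addition.
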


\begin{proof}
(1). Let $I^* \in \Ac$ be an optimal solutions to Problem \eqref{valueV}, and define $\opll$ by \eqref{eq:opll}. We easily see that $\opll\in C^{2-}([\ms, 1])$; furthermore, $\opll(1) = 0$ and $\opll'(\ms)=a^* (F^{-1}(\ms)- I^*(F^{-1}(\ms)))=a^* (0- I^*(0))=0$, implying that the two boundary conditions in \eqref{vi003} are satisfied.

Since $\opll'(F(z))= a^* (z - I^*(z))= a^* H^*(z)$, it follows that $\Phi$ in \eqref{opldef} at $H = H^*$ equals 
\begin{align*}
	\Phi(z; H^*) = -\opll(F(z))+1-F(z)-(1+\theta_0)\wg(1-F(z));
\end{align*}
by Theorem \ref{thm:op2}, $H^*$ solves the OIDE \eqref{vi001}. Using the above results allows us to transform \eqref{vi001} into 
\begin{align*} 
	\min \Bigg \{\max \left\{\frac{1}{a^*}\opll''(F(z))F'(z)-1, \; -\opll(F(z))+1-F(z)-(1+\theta_0)\wg(1-F(z))\right\}, \\
	\hfill\;\frac{1}{a^*}\opll''(F(z))F'(z) \Bigg\}=0, \quad\mbox{ a.e.}\;z>0,
\end{align*} 
which is equivalent to \eqref{vi003} by Lemma 4.4 in \citet{xu2021pareto} and the change of variable $p = F(z)$.

(2). Let $\opll\in C^{2-}([\ms, 1])$ be a solution to the ODE \eqref{vi003}. We define $H^*$ by
\begin{align}
	\label{barhdef1}
H^*(z) 	= \frac{1}{a^*} \, \opll'(F(z)), \quad z \ge 0,
\end{align} 
from which we get 
\begin{align*} 
	\opll(F(z)) &=-a^* \int_{F(z)}^{1} H^*(F^{-1}(t))\dt =-a^* \int_z^{\infty} H^*(x)\dd F(x), \quad z\geq 0.
\end{align*}
By \eqref{opldef}, $\Phi(z; H^*)=-\opll(F(z))+1-F(z)-(1+\theta_0)\wg(1-F(z)).$ 

Using the above identities and \eqref{vi003}, we successfully verify that $H^*$ in \eqref{barhdef1} satisfies \eqref{vi001}.
Therefore, by Theorem \ref{thm:op2}, $I^*$ in \eqref{eq:I_op_new} is an optimal solution to Problem \eqref{valueV}.

(3). Assume to the contrary that $\opll_1$ and $\opll_2$ are two solutions in $ C^{2-}([\ms, 1])$ to the ODE \eqref{vi003}. By assertion (2) above, we can construct two solutions by \eqref{barhdef1}, denoted by $H_1$ and $H_2$, to the infimum problem in \eqref{opi1}. 
However, by the strictly convexity of a square function and $F'>0$ from Assumption \ref{ass1}, the infimum problem in \eqref{opi1} admits a unique solution for each $a > 0$. 
Therefore, $H_1$ and $H_2$ must coincide, implying $\opll'_1(F(z))=\opll'_2(F(z))$ for all $z\geq 0$, or equivalently $\opll'_1(p)=\opll'_2(p)$ for all $p\in[\ms, 1]$. 
In consequence, $\opll_1-\opll_2$ must be a linear function on $[\ms, 1]$. 
By the boundary conditions in \eqref{vi003}, $\opll_1(1)=\opll_2(1)$ and $\opll_1'(\ms)=\opll'_2(\ms)$, we finally claim that $\opll_1-\opll_2$ is identical to zero as desired. 
\end{proof}

Note that the distortion function $\wg$ in this paper is \emph{not} necessarily concave; see the definition of admissible distortions in \eqref{eq:set_w}. However, when $\wg$ is concave, we can apply Proposition \ref{prop:equivalence} to show that an optimal indemnity to Problem \eqref{valueV} can be obtained as a solution to a single-obstacle problem, instead of a double-obstacle problem as in \eqref{vi003}.

\begin{coro}
\label{cor:diff}
Suppose \citeassmp{ass1} holds and $\wg$ is concave on $[\ms, 1]$. 
Then, the optimal indemnity function $I^*$ to Problem \eqref{valueV} is given by 
\begin{align*} 
I^*(z)= -\frac{1}{a^*} \, \Theta'(F(z)), \quad z\geq 0, 
\end{align*}
in which $a^*$ is given by \eqref{eq:v_star}, and $\Theta\in C^{2-}([\ms, 1])$ is the unique solution to the following ODE:
\begin{align} \label{vi003c}
	\begin{cases}
		\max\Big\{\Theta''(p), \; 1-p-(1+\theta_0)\wg(1-p)+ a^* \int_p^1 F^{-1}(t)\dt-\Theta(p) \Big\}=0, \quad\\
		\hfill \mbox{ a.e.\; $p\in (\ms, 1)$}, \\
		\Theta(1)=0, \quad \Theta'(\ms)=-a^* m_0.
	\end{cases}
\end{align}
\end{coro}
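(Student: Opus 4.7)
The plan is to leverage Proposition \ref{prop:equivalence} via the substitution
\begin{align*}
\Theta(p) \triangleq \opll(p) + a^* \int_p^1 F^{-1}(t) \dt, \quad p \in [\ms, 1],
\end{align*}
where $\opll$ is the unique $C^{2-}([\ms,1])$ solution to the double-obstacle ODE \eqref{vi003} provided by Proposition \ref{prop:equivalence}(3). This choice makes the middle terms of \eqref{vi003c} and \eqref{vi003} coincide algebraically (the $a^* \int_p^1 F^{-1}$ summand cancels), while the relation $\Theta''(p) = \opll''(p) - a^* h(p)$ converts the upper obstacle $\opll'' \le a^* h$ of \eqref{vi003} into $\Theta'' \le 0$ (the single obstacle retained in \eqref{vi003c}) and the lower obstacle $\opll'' \ge 0$ into $\Theta'' \ge -a^* h$, a condition that must be shown to hold automatically under concavity of $\wg$. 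Checking $\Theta \in C^{2-}([\ms,1])$ and the boundary conditions is routine from those of $\opll$ and Assumption \ref{ass1}.

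The crux is to show that the middle term $M(p) \triangleq 1-p-(1+\theta_0)\wg(1-p) - \opll(p)$ vanishes identically on the lower-obstacle set $\{\opll'' = 0\}$; on the other two regions of the trichotomy provided by \eqref{vi003}, the required complementarity $\max\{\Theta'', M\} = 0$ is immediate. On any maximal open interval $(p_1, p_2)$ on which $\opll'' \equiv 0$, the function $\opll$ is affine, so
\begin{align*}
M''(p) = -(1+\theta_0)\wg''(1-p) \ge 0
\end{align*}
in the distributional sense by concavity of $\wg$, rendering $M$ convex on $(p_1, p_2)$. By continuity of $M$ and the regime on the adjacent side of each internal transition $p_i$ --- either interior with $M = 0$ or upper obstacle with $M \le 0$ --- the boundary values satisfy $M(p_i) \le 0$; the identity $M(1) = 0$ settles the right endpoint case $p_2 = 1$. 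A convex function with both boundary values $\le 0$ is itself $\le 0$; combined with $M \ge 0$ coming from the lower-obstacle case of \eqref{vi003}, this forces $M \equiv 0$ on $[p_1, p_2]$, as required.

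Putting these pieces together, $\Theta$ solves \eqref{vi003c}. For uniqueness in $C^{2-}([\ms,1])$, I would run the inverse substitution $\opll = \Theta - a^* \int_p^1 F^{-1}$ on any $C^{2-}$ solution of \eqref{vi003c}: concavity of $\wg$ again yields $\Theta'' \ge -a^* h$ by the same computation (on $\{M = 0\}$, differentiate twice to obtain $\Theta''(p) = -(1+\theta_0)\wg''(1-p) - a^* h(p) \ge -a^* h(p)$; on $\{\Theta'' = 0\}$ it is trivial), so $\opll$ satisfies \eqref{vi003}, and uniqueness transfers from Proposition \ref{prop:equivalence}(3). The optimal indemnity formula follows from Proposition \ref{prop:equivalence}(2): since $a^* H^*(z) = \opll'(F(z)) = \Theta'(F(z)) + a^* z$, we deduce $I^*(z) = z - H^*(z) = -\Theta'(F(z))/a^*$. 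The principal technical obstacle is the boundary analysis when a lower-obstacle interval abuts $\ms$ (so $p_1 = \ms$ and the left endpoint value $M(\ms)$ is not a priori $\le 0$); this edge case must be handled via the boundary condition $\opll'(\ms) = 0$ together with a direct first-order optimality argument at $z = 0$, and one must also interpret $\wg''$ distributionally since $\wg$ is only assumed to be concave rather than classically twice differentiable.
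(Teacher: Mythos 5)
Your proposal runs the transformation in the opposite direction from the paper. The paper's proof \emph{starts} from a putative solution $\Theta$ of \eqref{vi003c}, sets $\opll(p)=\Theta(p)-a^*\int_p^1F^{-1}(t)\dt$, observes that $\opll$ then solves the single-obstacle problem \eqref{vi003b}, and shows via the concavity of $\wg$ that $\opll$ is convex on $(\ms,1)$ (decomposing $(\ms,1)$ into the open set $K=\{\opll>1-p-(1+\theta_0)\wg(1-p)\}$, where $\opll''=a^*h\ge 0$, and its complement, where $\opll$ equals the convex obstacle, and gluing the monotonicity of $\opll'$); hence $\opll$ also solves the double-obstacle problem \eqref{vi003}, and Proposition \ref{prop:equivalence} finishes. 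That is essentially your second (``uniqueness'') paragraph, modulo the care needed to differentiate an identity only on open sets rather than on the closed set $\{M=0\}$. Your first (``existence'') half --- starting from the unique $\opll$ of \eqref{vi003} and trying to show $\Theta=\opll+a^*\int_p^1F^{-1}$ solves \eqref{vi003c} --- is something the paper does not attempt, and that is exactly where the gap lies.

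The left-endpoint case you flag is not a removable technicality. Your argument needs $M\le 0$ on all of $[\ms,1]$, and the convexity-plus-boundary-values device gives no information on a component of $\{M>0\}$ abutting $\ms$, because $M(\ms)$ has no a priori sign. Indeed $M(\ms+)=\opl(0+;H^*)=a^*\int_0^{\infty}H^*(x)\dd F(x)+(1-\ms)-(1+\theta_0)\wg(1-\ms)$, and this is \emph{strictly positive} in admissible cases: take $F(0)=0$ and $\theta_0=0$, so that $M(\ms+)=a^*\int_0^{\infty}H^*\dd F>0$ because $H^*\equiv 0$ is never optimal (the paper shows $v(a^*)<0$). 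Nor can a ``first-order optimality argument at $z=0$'' rescue this: $\opl(\cdot;H^*)>0$ on an initial layer merely forces $(H^*)'=0$ there, i.e., full reinsurance of small losses, which is perfectly consistent with optimality when the concave distortion prices the bottom layer at (near) actuarial fairness while cession still reduces variance. So the implication ``unique solution of \eqref{vi003} $\Rightarrow$ solution of \eqref{vi003c}'' is false in general, and your existence half cannot be completed as written; what your analysis actually uncovers is that \eqref{vi003c} need not admit any solution (an existence issue the paper sidesteps by proving only the converse implication). A minor side remark: under the substitution, $\Theta'(\ms)=\opll'(\ms)-a^*F^{-1}(\ms)=0$ rather than $-a^*\ms$; the boundary condition in \eqref{vi003c} appears to contain a typo, not an error of yours.
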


\begin{proof}
	Suppose the ODE in \eqref{vi003c} admits a solution $\Theta\in C^{2-}([\ms, 1])$. Using this $\Theta$, we define
	\[\opll(p)=\Theta(p)- a^* \int_p^1 F^{-1}(t)\dt, \quad p\in[\ms, 1].\]
	Then, $\opll\in C^{2-}([\ms, 1])$ and satisfies
	\begin{align} \label{vi003b}
		\begin{cases}
			\max\big\{\opll''(p)-a^* h(p), \; 
			1-p-(1+\theta_0)\wg(1-p)-\opll(p)
			\big\}=0, 
			\hfill \mbox{ a.e.\; $p\in (\ms, 1)$}, \\
			\opll(1)=0, \quad \opll'(\ms)=0.
		\end{cases}
	\end{align} 
	
	We first show that $\opll$ above is convex on $(\ms, 1)$.
	Since $\wg$ is concave by assumption, it is continuous on $(\ms, 1)$. Define a set $K$ by 
	\[K=\big\{p\in(\ms, 1)\;\big|\; \opll(p)>1-p-(1+\theta_0)\wg(1-p)\big\}.\]
	Then, by continuity, $K$ is an open set and thus consists of countably many disjoint open intervals, denoted by $\{K_i\}_{i=1,2,\dots}$. By \eqref{vi003b}, on any such each $K_i$, 
	\[\opll'(\cdot)=\int^{\cdot} \opll''(p)\ddp=a^* \int^{\cdot} h(p)\ddp\]
	is non-decreasing as $h> 0$ from Assumption \ref{ass1}. Since $\opll'$ is continuous, it is non-decreasing on $\overline{K}_i$, the closure of $K_i$.
	For any $p\in(\ms, 1)/\cup_i \overline{K}_i$, we have $\opll(p)=1-p-(1+\theta_0)\wg(1-p)$, by \eqref{vi003b} and continuity. (Notice $(\ms, 1)/\cup_i \overline{K}_i$ is also an open set.) Because $\opll(p)=1-p-(1+\theta_0)\wg(1-p)$ and $\wg$ is concave, $\opll$ is convex on each subinterval of $(\ms, 1)/\cup\overline{K}_i$, and correspondingly $\opll'$ is non-decreasing on the closure of each subinterval. By continuity, we conclude that $\opll'$ is non-decreasing on $(\ms, 1)$. 
	Hence, $\opll$ is convex on $(\ms, 1)$. 
	This together with \eqref{vi003b} implies that $\opll$ satisfies \eqref{vi003}. The other claims follow from Proposition \ref{prop:equivalence} immediately.
\end{proof}

\begin{remark}
	\label{rem:discussion}
Recall that we only consider moral-hazard-free reinsurance contracts whose indemnities satisfy $ 0 \le I' \le 1$ as in \eqref{eq:Ac}.
Because of such double constraints, the ODE \eqref{vi003} is a double-obstacle problem (written in a min-max form), which is notoriously difficult to solve. Previous works often impose strong sufficient conditions to ensure $\opll''\geq 0$, which will help reduce a double-obstacle problem into a single-obstacle problem. \citet{xu2021pareto} is one of the first papers that solve double-obstacle problems without assuming $\opll''\geq 0$. 
Here, we are able to accomplish a similar mission in Proposition \ref{prop:equivalence} under rather general conditions, albeit for a special double-obstacle problem arising from the study of the insurer's optimal reinsurance problem in \eqref{valueV}.
Under the additional assumption that $\wg$ is concave, we further show that the ODE \eqref{vi003} is equivalent to the ODE \eqref{vi003c}, which is a single-obstacle problem.
\end{remark}

\appendix

\section{Technical proofs}
\label{sec:proof}

\begin{proof}[\underline{Proof of Lemma \ref{lem:v}}]

For any $H \in \Ac$, $0\leq H(z)\leq z$ for all $z \ge 0$, which immediately gives the lower bound of $v$ in \eqref{lim1}. Since $H (z) \equiv 0 \in \Ac$, $v(a) \le 0$. 
By recalling the definition of $v$ in \eqref{opi1}, we easily verify that $v$ is bounded, continuous, increasing, and concave. 

In fact, a sharper result $v(a) < 0$ holds. To see this, consider $H_n(z)= n^{-3} (z \wedge n)$; we easily verify that $H_n \in \Ac$ and $H_n \le n^{-2}$ for all $n=1,2,\cdots$. By using such $H_n$ and \eqref{opi1}, we obtain 
\begin{align*} 
v(a) &\leq \int_0^{\infty}\Big[\frac{a}{2}( n^{-3}(z\wedge n))^2+( n^{-3}(z\wedge n))\Big]\dd F(z)-(1+\theta)\int_0^{\infty} ( n^{-3}(z\wedge n))\dd\pw(z)\nn\\ 
&\leq \frac{a}{2}n^{-4}+n^{-3}\int_0^{\infty} z\dd F(z)-(1+\theta)n^{-3}\int_0^{\infty} (z\wedge n)\dd\pw(z)\nn\\ 
&=n^{-3}\Big(\frac{a}{2}n^{-1}+\int_0^{\infty} z\dd F(z)-(1+\theta)\int_0^{\infty} (z\wedge n)\dd\pw(z)\Big),
\end{align*} 
which, upon applying \mct and using \eqref{eq:new_asu}, yields $v(a) < 0$. 

To establish the upper bound of $v$ in \eqref{lim1}, we consider $\tilde{H}_n(z) = z \wedge n$, in which $n > 1$. 
For any $n > 1$, we consider $v$ at $a = n^{-3}$ and use the face that $\tilde{H}_n \in \Ac$ to obtain 
\begin{align*}
v(n^{-3})&\leq 
\int_0^{\infty}\Big[\frac{n^{-3}}{2}(z\wedge n)^2+(z\wedge n)\Big]\dd F(z)-(1+\theta)\int_0^{\infty}(z\wedge n)\dd\pw(z)\\ 
&\leq \frac{1}{2n}+\int_0^{\infty} z\dd F(z)-(1+\theta)\int_0^{\infty} (z\wedge n)\dd\pw(z),
\end{align*}
which, combining with the monotonicity of $v$ and \mct, confirms the upper bound of $v$ in \eqref{lim1}.

In the next agenda, we show the limit result in \eqref{lim2} in two steps.
\\[1ex]
\underline{Step 1}. We only consider distortions $\wg$ that are continuous and piecewise differentiable. 
For such $\wg$, by \eqref{eq:F_hat}, we have 
\[\dd\pw(z)=\frac{\wg'(1-F(z))}{\wg(1-F(0))}\dd F(z).\]
For notational simplicity, we write, for $z>0$, 
\[\beta=\beta(z)=(1+\theta)\frac{\wg'(1-F(z))}{\wg(1-F(0))}\geq 0\]
and 
\[k=k(z)=\sqrt{2az\beta+(1+\beta)^2}-1-\beta\geq 0.\]
It is easy to check 
\[-\frac{k}{a}\beta=\frac{k^2+2k}{2a}-z\beta.\]
We proceed to compute the integrand in \eqref{opi1} and obtain 
\begin{align*} 
\frac{a}{2} H^2(z) +H(z)-H(z) (1+\theta)\frac{\wg'(1-F(z))}{\wg(1-F(0))}
\geq -H(z) (1+\theta)\frac{\wg'(1-F(z))}{\wg(1-F(0))}\geq-\frac{k}{a}\beta,
\end{align*} 
when $0\leq H(z)\leq \frac{k}{a}$; and 
\begin{align*} 
\frac{a}{2}H^2(z)+H(z)-H(z) (1+\theta)\frac{\wg'(1-F(z))}{\wg(1-F(0))}
\geq \frac{a}{2}\left(\frac{k}{a}\right)^2+\frac{k}{a}-z\beta=-\frac{k}{a}\beta,
\end{align*} 
when $\frac{k}{a}\leq H(z)\leq z$.
Hence, we have
\begin{align*} 
v(a)&=\inf_{H \in \Ac} \int_0^{\infty}\Big[\frac{a}{2}H(z)^2+H(z)-(1+\theta) H(z) \frac{\wg'(1-F(z))}{\wg(1-F(0))}\Big]\dd F(z) \\ 
&\geq-\int_0^{\infty} \frac{k(z)}{a}\beta(z)\dd F(z)=-\int_0^{\infty}\frac{2z\beta^2(z)}{\sqrt{2az\beta(z)+(1+\beta(z))^2}+1+\beta(z)}\dd F(z).
\end{align*} 
Notice 
\[0\leq \frac{2z\beta^2(z)}{\sqrt{2az\beta(z)+(1+\beta(z))^2}+1+\beta(z)}
\leq \frac{2z\beta^2(z)}{\sqrt{(1+\beta(z))^2}+1+\beta(z)}\leq z \beta(z)\]
and, by virtue of \eqref{eq:new_c},
\begin{align*} 
\int_0^{\infty} z\beta(z)\dd F(z)=(1+\theta)\int_0^{\infty} z \dd\pw(z)<\infty. 
\end{align*} 
The above results, along with the \dct, imply $\lim_{a \to +\infty} \, v(a) \ge 0$.
Finally, by recalling $v(a) < 0$ for all $a \in (0, \infty)$, \eqref{lim2} holds.
\\[1ex]
\underline{Step 2}. We now consider general distortions $\wg \in \setw$. 
In this case, let $g_0$ be the concave envelope of $\wg$ on $[0, 1-F(0)]$ and smoothly increase to 1 on $(1-F(0),1]$. 
By such a construction, $g_0 \in \setw$ is continuous and piecewise differentiable, fitting the conditions of $\wg$ in Step 1. 
Let $\pw_0$ be the dual for $F$ under the distortion $g_0$, as defined by \eqref{eq:F_hat}. 
Since $g_0(1-F(0))=\wg(1-F(0))$ and $g_0\geq \wg$ on $[0,1-F(0)]$, $\pw\geq \pw_0$. 
For any $H \in \Ac$ (implying $H' \ge 0$), Fubini's theorem yields 
\begin{align*} 
\int_0^{\infty}H(z)\dd \left(\pw_0(z)-\pw(z) \right)=\int_0^{\infty} \left(\pw(z)-\pw_0(z) \right) \, H'(z)\dz \geq 0. 
\end{align*}
The analysis in Step 1 shows that 
\begin{align*}
\lim_{a\to+\infty}\inf_{H \in \Ac} \int_0^{\infty}\Big[\frac{a}{2} \, H^2(z)+H(z)\Big]\dd F(z)-(1+\theta)\int_0^{\infty}H(z)\dd\pw_0(z)=0.
\end{align*}
Combining these two results implies $\lim_{a\to+\infty}v(a) \ge 0$, which, upon recalling $v(a) < 0$, verifies \eqref{lim2}. 

We have already seen that $v$ is increasing; in the remaining of the proof, we further show that $v$ is strictly increasing over $(0, \infty)$. 
To that end, assume that the contrary is true. 
As a consequence, the maximum of $v$ is achieved at an internal point $a_M \in (0, \infty)$ by the concavity of $v$. Since $v(a) < 0$ for all $a \in (0, \infty)$, we must have $v(a_M) < 0$, which contradicts the limit result $\lim_{a \to \infty} \, v(a) = 0$ in \eqref{lim2}. Therefore, $v$ must be strictly increasing over $(0, \infty)$.
\end{proof} 

\begin{proof}[\underline{Proof of Theorem \ref{thm:existence}}]
We use a compact argument to prove the claim.
Let $\{H_n\}_{n=1,2,\cdots}$ be a minimizing sequence for the infimum problem in \eqref{opi1}. 
Since $0\leq H'_n\leq 1$, by the Arzel\`a-Ascoli theorem, for each positive integer $M$, there exists a positive integer $S_M$, a continuous function $\barh_M$, and a subsequence, which we still as $\{H_n\}_{n=1,2,\cdots}$, such that 
\[|H_{n}(z)-\barh_M(z)|< \ee^{-M}, \quad\mbox{ for all $z\in [0, M]$ and $n\geq S_M$}.\]
By the triangle inequality, the sequence $\{H_{S_{M}}\}_{M=1,2,\cdots}$ satisfies 
\begin{align}\label{convergence1} 
|H_{S_{M}}(z)-H_{S_{M+1}}(z)|\leq |H_{S_{M}}(z)-\barh_M(z)|+|H_{S_{M+1}}(z)-\barh_M(z)|<2 \ee^{-M}, 
\end{align}
for all $z\in [0, M]$. 
Therefore, the sequence $\{H_{S_{M}}\}_{M=1,2,\cdots}$ converges to a limit $\barh$ 
\begin{align}
\label{eq:H_bar}
\barh(z) = \lim_{M \to \infty} \, H_{S_{M}} (z), \quad z \ge 0.
\end{align}
By \eqref{eq:Ac}, we can verify that $\barh \in \Ac$ is an admissible retention function.

In what follows, we aim to show that $\barh$ defined above is an optimal solution to the infimum problem in \eqref{opi1}. 
For any $\ep\in(0, 1)$, we know from \eqref{eq:new_asu} that there exists a constant $K_{\ep}>0$ such that 
\[ (1+\theta)\int_{[K_{\ep}, \infty)} z\dd\pw(z)<\ep.\]
Because $0\leq H_n(z)\leq z$ for all $n$, it follows that 
\[ (1+\theta)\int_{[K, \infty)} H_{S_{M}}(z)\dd\pw(z)\leq (1+\theta)\int_{[K_{\ep}, \infty)} z\dd\pw(z)<\ep\]
for all $K\geq K_{\ep}$ and all $M$. 
By \dct, and noticing that $\{H_{S_{M}}\}$ is a minimizing sequence for the infimum problem in \eqref{opi1}, we deduce
\begin{align*} 
&\quad\; \int_{[0, K)}\Big[\frac{a}{2} \, \barh^2(z) + \barh(z)\Big]\dd F(z)-(1+\theta)\int_{[0, K)} \barh(z)\dd\pw(z)\\ 
&= \lim_{M \to\infty}\bigg[\int_{[0, K)}\Big[\frac{a}{2}\, \big( H_{S_{M}}(z) \big)^2+H_{S_{M}}(z)\Big]\dd F(z)-(1+\theta)\int_{[0, K)} H_{S_{M}}(z)\dd\pw(z)\bigg]\\
&\leq \lim_{M\to\infty}\bigg[\int_0^{\infty}\Big[\frac{a}{2} \, \big( H_{S_{M}}(z) \big)^2 +H_{S_{M}}(z)\Big]\dd F(z)-(1+\theta)\int_0^{\infty} H_{S_{M}}(z)\dd\pw(z)+\ep\bigg]\\
&=v(a)+\ep, 
\end{align*}
for all $K\geq K_{\ep}$. Applying \mct and using $0\leq \barh(z)\leq z$ yield
\[0\leq \lim_{K\to\infty} \int_{[0, K)} \barh(z)\dd\pw(z)=\int_0^{\infty} \barh(z)\dd\pw(z)\leq \int_0^{\infty}z\dd\pw(z)<\infty, \]
and
\[\lim_{K\to\infty} \int_{[0, K)}\Big[\frac{a}{2} \, \barh^2(z) + \barh(z)\Big]\dd F(z)=\int_0^{\infty} \Big[\frac{a}{2} \, \barh^2(z) + \barh(z)\Big]\dd F(z), \]
from which we get 
\begin{align*} 
\int_0^{\infty}\Big[\frac{a}{2} \, \barh^2(z) + \barh(z)\Big]\dd F(z)-(1+\theta)\int_0^{\infty} \barh(z)\dd\pw(z)&\leq v(a)+\ep.
\end{align*}
Since $\ep>0$ is arbitrarily chosen, this shows $\barh$ is an optimal solution to the infimum problem in \eqref{opi1}.
\end{proof}

\begin{proof}[\underline{Proof of Theorem \ref{thm:op1}}] The proof is divided into two steps.
\\[1ex]
\underline{Step 1}. Suppose $H^* \in \Ac$ is an optimal solution to Problem \eqref{eq:new_prob}, and we show \eqref{eq:op_I} holds. 

To that end, for any $H\in \Ac$, $\ep \in (0,1)$, and $n>1$, define 
\[H_{\ep}(z)= H^*(z)+\ep \big( H(z)\wedge n - H^*(z) \big).\]
Then, it is easily seen that $H_{\ep} \in \Ac$. 
Using the optimality of $H^*$, we get 
\begin{align*} 
0 &\leq \liminf_{\ep\to 0+}\frac{1}{\ep}\bigg[\Big\{\int_0^{\infty}\Big[\frac{a^*}{2} (H_{\ep}(z))^2+H_{\ep}(z)\Big]\dd F(z)-(1+\theta)\int_0^{\infty}H_{\ep}(z)\dd\pw(z)\Big\}\\
&\qquad\qquad\qquad -\Big\{\int_0^{\infty}\Big[\frac{a^*}{2} (H^*(z))^2+ H^*(z)\Big]\dd F(z)-(1+\theta)\int_0^{\infty} H^*(z)\dd\pw(z)\Big\}\bigg]\\
&=\liminf_{\ep\to 0+}\bigg[\int_0^{\infty}\big((H(z)\wedge n)- H^*(z)\big)\big[(a^* H^*(z)+1)\dd F(z)-(1+\theta)\dd\pw(z)\big] \\
&\qquad\qquad\qquad +\ep\int_0^{\infty}\frac{a^*}{2}\big((H(z)\wedge n)- H^*(z)\big)^2 \dd F(z)\bigg]\\
&=\int_0^{\infty}\big((H(z)\wedge n)- H^*(z)\big)\big[(a^* H^*(z)+1)\dd F(z)-(1+\theta)\dd\pw(z)\big],
\end{align*} 
in which we have used the fact that $v(a^*) < 0$ from Lemma \ref{lem:v} to derive the last equality. 

Applying \mct and noting $0\leq H(z)\leq z$ lead to 
\begin{align*} 
\lim_{n \to \infty} \int_0^{\infty} (H(z)\wedge n) (a^* H^*(z)+1)\dd F(z) 
=\int_0^{\infty} H(z) (a^* H^*(z)+1)\dd F(z)
\end{align*} 
and
\begin{align*} 
0\leq \lim_{n \to \infty} \int_0^{\infty} (H(z)\wedge n ) \dd\pw(z)=\int_0^{\infty} H(z) \dd\pw(z)
\leq \int_0^{\infty} z\dd\pw(z)<\infty,
\end{align*} 
which together prove \eqref{eq:op_I}.
\\[1ex] 
\underline{Step 2}. Suppose there exists an $H^* \in \Ac$ that satisfies \eqref{eq:op_I}, and we aim to show that such an $H^*$ is an optimal solution to Problem \eqref{eq:new_prob}. 

Since $H^* / 2 \in \Ac$, we replace $H$ by $H^*/2$ in \eqref{eq:op_I} and obtain 
\[\int_0^{\infty} H^*(z) \big[(a^* H^*(z)+1)\dd F(z)-(1+\theta)\dd\pw(z)\big] \leq 0.\]
Because $0\leq H^*(z)\leq z$, we have 
\[ 0\leq \int_0^{\infty} H^*(z) \dd F(z)\leq \int_0^{\infty}z \dd F(z)<\infty \quad 
\text{ and } \quad 0\leq \int_0^{\infty} H^*(z) \dd\pw(z)\leq \int_0^{\infty}z \dd \pw(z)<\infty.\] 
It then follows
\begin{align}\label{L2a}
a^* \int_0^{\infty} (H^*(z))^2\dd F(z)<\infty.
\end{align}

Now suppose that $H^*$ is not an optimal solution to Problem \eqref{eq:new_prob}. 
Then, there exists a function $H_{1}\in\Ac$ and a constant $c>0$ such that 
\begin{align*} 
&\quad\; \int_0^{\infty}\Big[\frac{a^*}{2} (H_{1}(z))^2+H_{1}(z)\Big]\dd F(z)-(1+\theta)\int_0^{\infty}H_{1}(z)\dd\pw(z)\\
&\leq \int_0^{\infty}\Big[\frac{a^*}{2} (H^*(z))^2+ H^*(z)\Big]\dd F(z)-(1+\theta)\int_0^{\infty} H^*(z)\dd\pw(z)-c<\infty.
\end{align*} 
By a similar argument as above, 
\begin{align}\label{L2b}
a^* \int_0^{\infty} (H_1(z))^2\dd F(z)<\infty.
\end{align} 

For $\ep\in(0, 1)$, define $H_{\ep}(z)=\ep H_1(z)+(1-\ep) H^*(z)$, 
and note, by \eqref{eq:Ac}, that $H_{\ep}\in \Ac$. 
By the convexity of square functions and the hypothesis, we have 
\begin{align*} 
&\quad\;\int_0^{\infty}\Big[\frac{a^*}{2} (H_{\ep}(z))^2+H_{\ep}(z)\Big]\dd F(z)-(1+\theta)\int_0^{\infty}H_{\ep}(z)\dd\pw(z)\\
&\leq \ep\Big\{\int_0^{\infty}\Big[\frac{a^*}{2} (H_{1}(z))^2+H_{1}(z)\Big]\dd F(z)-(1+\theta)\int_0^{\infty}H_{1}(z)\dd\pw(z)\Big\}\\
&\quad\;+(1-\ep)\Big\{\int_0^{\infty}\Big[\frac{a^*}{2} (H^*(z))^2+ H^*(z)\Big]\dd F(z)-(1+\theta)\int_0^{\infty} H^*(z)\dd\pw(z)\Big\}\\
&\leq \int_0^{\infty}\Big[\frac{a^*}{2} (H^*(z))^2 + H^*(z)\Big]\dd F(z)-(1+\theta)\int_0^{\infty} H^*(z)\dd\pw(z)-c\ep,
\end{align*} 
and thus 
\begin{align} 
\label{eq:con}
\qquad\liminf_{\ep\to 0+} \frac{1}{\ep}&\bigg[\Big\{ \int_0^{\infty}\Big[\frac{a^*}{2} (H_{\ep}(z))^2+H_{\ep}(z)\Big]\dd F(z)-(1+\theta)\int_0^{\infty}H_{\ep}(z)\dd\pw(z)\Big\} \\
&-\Big\{\int_0^{\infty}\Big[\frac{a^*}{2} (H^*(z))^2 + H^*(z)\Big]\dd F(z)-(1+\theta)\int_0^{\infty} H^*(z)\dd\pw(z)\Big\}\bigg]\leq-c<0. \notag
\end{align} 
However, by \eqref{L2a}, \eqref{L2b}, and the 
definition of $H_\ep$, we see that the left hand side of the above inequality in \eqref{eq:con} is equal to 
\begin{align*} 
&\quad\;\liminf_{\ep\to 0+} \bigg[\int_0^{\infty}\big(H_1(z)- H^*(z)\big)\big[(a^* H^*(z)+1)\dd F(z)-(1+\theta)\dd\pw(z)\big]\\
&\qquad\qquad\qquad\qquad+\ep \int_0^{\infty} \frac{a^*}{2}(H_1(z)- H^*(z))^2 \dd F(z)\bigg]\\
&=\int_0^{\infty}\big(H_1(z) - H^*(z)\big)\big[(a^* H^*(z)+1)\dd F(z)-(1+\theta)\dd\pw(z)\big],
\end{align*}
which is nonnegative due to \eqref{eq:op_I}, and thus contradicts \eqref{eq:con}. 
This completes the proof. 
\end{proof}

\begin{proof}[\underline{Proof of Theorem \ref{thm:op2}}]
With $\Phi$ defined in \eqref{opldef}, the optimality condition in \eqref{eq:op_I} is equivalent to 
$\int_0^{\infty} \big(H(z)- H^*(z)\big)\dd\opl(z)\leq 0$.
Since $\opl(\infty)=0$ and $H(0)=H^*(0)=0$, applying Fubini's theorem to the previous inequality gives 
\begin{align*} 
\int_{0}^{\infty} \opl(z)\big(H'(z)- (H^*)'(z)\big)\dz\geq 0.
\end{align*} 
Because 
\begin{align*} 
\int_{0}^{\infty} \opl(z) (H^*)'(z) \dz&=-\int_{0}^{\infty} H^*(z) \opl'(z) \dz
\geq -(1+\theta)\int_{0}^{\infty}z\dd\pw(z)
>-\infty,
\end{align*} 
we get 
\begin{align*} 
\int_{0}^{\infty} \opl(z) H'(z) \dz\geq \int_{0}^{\infty} \opl(z) (H^*)'(z) \dz>-\infty, \quad \forall H \in \Ac. 
\end{align*} 
The above result implies that the linear functional 
\begin{align} \label{optimalcondition1-2}
H\mapsto \int_{0}^{\infty} \opl(z)H'(z)\dz, \quad H\in \Ac, 
\end{align} 
is minimized at $H^*$. Because $0\leq H'\leq 1$ for $H\in\Ac$, we conclude that 
\begin{align} \label{optimalcondition2}
\begin{cases}
(H^*)'(z)=1, &\;\text{if } \opl(z)<0;\\
(H^*)'(z)\in[0, 1], &\;\text{if } \opl(z)=0;\\
(H^*)'(z)=0, &\;\text{if } \opl(z)>0, 
\end{cases}
\end{align} 
for a.e. $z>0$.
By Lemma \ref{lem:in_eq}, the condition in \eqref{optimalcondition2} is equivalent to the OIDE in \eqref{vi001}.
\end{proof}

\begin{lemma}[Lemma 4.2, \citet{xu2021pareto}]
\label{lem:in_eq}
Let $a, b, c, d$ be real numbers and assume $b \le c$.
Then, 
$\min\{\max\{a-c, \; d\}, \;a-b\}=0$ if and only if $a=c$ when $d <0$, $a \in [b, c]$ when $d = 0$, and $a = b$ when $d > 0$. 
\end{lemma}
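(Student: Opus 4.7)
The plan is a direct case analysis on the sign of $d$. Writing $M := \max\{a-c,d\}$ and $L := \min\{M,a-b\}$, the hypothesis $b \le c$ immediately gives $a-b \ge a-c$ for every $a$, which is the single algebraic fact driving the whole argument. I would first record this inequality, then work through the three cases $d<0$, $d=0$, $d>0$ separately, verifying both implications in each case.

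When $d<0$: if $a-c \le d$ then $M = d < 0$, forcing $L \le d < 0$, which is impossible. So $a-c > d$, $M = a-c$, and then $L = \min\{a-c,\,a-b\} = a-c$ since $a-b \ge a-c$. Hence $L=0$ iff $a = c$; and $a=c$ indeed yields $a-c = 0 > d$, so the two directions match. When $d=0$: $M = \max\{a-c,0\} \ge 0$, and if $a > c$ then $L = \min\{a-c,a-b\} = a-c > 0$, so we must have $a \le c$, in which case $M = 0$ and $L = \min\{0,a-b\}$ vanishes iff $a \ge b$. Together $L=0 \Longleftrightarrow a \in [b,c]$. When $d>0$: $M \ge d > 0$, so $L = \min\{M,a-b\} = 0$ forces $a-b = 0$; conversely, $a = b$ gives $M \ge d > 0$ and $a-b = 0$, whence $L = 0$.

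There is no real obstacle here: the lemma is an elementary piecewise-linear identity, and the only thing to be careful about is checking that, in each case, the candidate solution set actually makes $M$ and $a-b$ fall on the correct sides of zero so that the outer $\min$ is genuinely attained at the claimed value rather than being bounded above it. The hypothesis $b \le c$ is used exactly once per case, either through $a-b \ge a-c$ (Cases $d<0$ and $d=0$ with $a > c$) or through $\max\{b-c,d\} = d > 0$ (verifying sufficiency in Case $d>0$).
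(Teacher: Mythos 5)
Your case analysis is correct and complete: both directions are verified in each of the three cases, and the single use of the hypothesis $b \le c$ (via $a-b \ge a-c$) is correctly identified as the driving fact. The paper itself does not prove this lemma but cites it from \citet{xu2021pareto}; your elementary argument is exactly the standard one for this statement, so there is nothing to add.
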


\begin{proof}[\underline{Proof of Corollary \ref{coro1}}]
In view of Theorem \ref{thm:veri}, it suffices to show $H^*(z)=z- I^*(z)=\min\{d, \; z\}$ 
is an optimal solution to the infimum problem in \eqref{opi1} with $a = a^*$. By Theorem \ref{thm:op2}, it remains to show the above $H^*$ satisfies the OIDE \eqref{vi001}, which is equivalent to proving $\opl(z; H^*)\geq 0$ for $z\in [d, \infty)$ and $\opl(z; H^*)\leq 0$ for $z\in [0, d)$. 

Using the definition of $\Phi$ in \eqref{opldef} and the above $H^*$, we compute
\begin{align*}
\opl(z; H^*)&=\int_z^{\infty}\big[(a^* \min\{d, \; x\}+1)\dd F(x)-(1+\theta)\dd\pw(x)\big]\\
&=-a^* \int_z^{d} (d-x)^+\dd F(x)+(a^* d+1)(1-F(z))-(1+\theta_0)\wg(1-F(z)). 
\end{align*}
To proceed, we consider two separate scenarios based on the value of $z$. 
\\[1ex]
\underline{Scenario 1}. When $z\in [d, \infty)$, we simplify to get $\opl(z; H^*) =(a^* d+1)(1-F(z))-(1+\theta_0)\wg(1-F(z))$. 
Hence, $\opl(z; H^*)\geq 0$ is equivalent to the second condition in \eqref{deductible}. 
\\[1ex]
\underline{Scenario 2}. When $z \in [0, d)$, 
\begin{align*} 
\opl(z; H^*) &=a^*\int_z^{\infty} (1-F(x)) (H^*(x))' \dx+(1-F(z))(a H^*(z)+1)-(1+\theta_0)\wg(1-F(z))\\
&=a^*\int_z^{d} (1-F(x))\dx+(1-F(z))(a^* z+1)-(1+\theta_0)\wg(1-F(z))\\
&=a\int_z^{d} (F(z)-F(x))\dx+(1-F(z))(a^*d+1)-(1+\theta_0)\wg(1-F(z))\\
&=-a^*\int_0^{d} (F(x)-F(z))^+\dx+(1-F(z))(a^*d+1)-(1+\theta_0)\wg(1-F(z)).
\end{align*} 
As such, $\Phi(z;H^*) \le 0$ if and only if the first condition in \eqref{deductible} holds. 

Finally, the result of $a^*$ follows from \eqref{eq:v_star} and 
\begin{align*} 
v(a^*)
&=\int_0^{\infty}\Big[\frac{a^*}{2}\min\{d, z\}^2+\min\{d, z\}\Big]\dd F(z)-(1+\theta)\int_0^{\infty}\min\{d, z\}\dd\pw(z)\\
&=\int_0^{\infty}\Big[\frac{a^*}{2}\min\{d, z\}^2+z-(z-d)^+\Big]\dd F(z)-(1+\theta)\int_0^{\infty}(z-(z-d)^+)\dd\pw(z).
\end{align*} 
\end{proof}

\small 
\setlength{\bibsep}{0pt plus 0.3ex}
\bibliographystyle{abbrvnat}
\bibliography{reference}

\end{document}